\newcommand{\blind}{1}
\newtheorem{lemma}{Lemma}
\newtheorem{proposition}{Proposition}
\DeclareMathOperator*{\argmin}{arg\,min}
\newcommand{\CV}{\operatorname{CV}}
\newcommand{\E}{\operatorname{E}}
\DeclareMathOperator*{\cov}{cov}
\DeclareMathOperator*{\tr}{tr}
\newcommand{\T}{T}
\newcommand{\OhP}{O_p}
\newcommand{\R}{\mathbb{R}}
\newcommand{\muX}{\mu^{X}}
\newcommand{\muY}{\mu^{Y}}
\newcommand{\bmuX}{\bar \mu^{X}}
\newcommand{\bmuY}{\bar \mu^{Y}}
\newcommand{\hmuY}{\hat \mu^{Y}}
\newcommand{\dataX}{\mathfrak{X}}
\newcommand{\Xtrain}{X_{\text{train}}}
\newcommand{\Ytrain}{Y_{\text{train}}}
\newcommand{\Xtest}{X_{\text{test}}}
\newcommand{\Ytest}{Y_{\text{test}}}
\newcommand{\hGX}{\hat G^{X}}
\newcommand{\hGY}{\hat G^{Y}}
\begin{document}

\def\spacingset#1{\renewcommand{\baselinestretch}%
{#1}\small\normalsize} \spacingset{1}


\if1\blind
{
  \title{\bf Estimating the number of clusters using cross-validation}
  \author{Wei Fu and Patrick O. Perry \\
  Stern School of Business, New York University}
  \maketitle
} \fi

\if0\blind
{
  \bigskip
  \bigskip
  \bigskip
  \begin{center}
    {\LARGE\bf Estimating the number of clusters using cross-validation}
\end{center}
  \medskip
} \fi

\bigskip
\begin{abstract}
Many clustering methods, including $k$-means, require the user to specify the
number of clusters as an input parameter. A variety of methods have been
devised to choose the number of clusters automatically, but they often rely on
strong modeling assumptions. This paper proposes a data-driven approach to
estimate the number of clusters based on a novel form of cross-validation. The
proposed method differs from ordinary cross-validation, because clustering is
fundamentally an unsupervised learning problem. Simulation and real data
analysis results show that the proposed method outperforms existing methods,
especially in high-dimensional settings with heterogeneous or heavy-tailed
noise. In a yeast cell cycle dataset, the proposed method finds a parsimonious
clustering with interpretable gene groupings.
\end{abstract}

\noindent%
{\it Keywords:} clustering, cross-validation, $k$-means, model selection, unsupervised learning
\vfill

\newpage
\spacingset{1.45} 
\section{Introduction}
\label{sec:intro}

A clustering procedure segments a collection of items into smaller groups,
with the property that items in the same group are more similar to each other
than items in different groups \citep{hartigan1975clustering}.  Such
procedures are used in two main applicaitons: (a) exploratory analysis, where clusters
reveal homogeneous sub-groups within a large sample; (b) data
reduction, where high-dimensional item attribute vectors get reduced to discrete
cluster labels \citep{jain1999data}.

With many clustering methods, including the popular $k$-means clustering
procedure, the user must specify $k$, the number of clusters
\citep{jain2010data}.  One popular ad-hoc device for selecting the number of
clusters is to use an analogue of the principal components scree plot: plot
the within-cluster dispersion $W_k$, as a function of the number of clusters
$k$, looking for an ``elbow'' in the plot.  This approach is simple and often
performs well, but it requires subjective judgment as to where the elbow is
located, and as we demonstrate in Appendix~\ref{sec:elbow-fail}, the approach
can easily fail.  In this report, we propose a new method to choose the number
of clusters automatically.

The problem of choosing $k$ has been well-studied, and dozens of methods have
been proposed~\citep{chiang2010intelligent,fujita2014non}. The main difficulty
in choosing $k$ is that clustering is fundamentally an ``unsupervised''
learning problem, meaning that there is no obvious way to use ``prediction
ability'' to drive the model selection \citep{hastie2009elements}.  Most
existing methods for choosing $k$ instead rely on explicit or implicit
assumptions about the data distribution, including it shape, scale, and
correlation structure.

Several authors advocate choosing $k$ by performing a sequence of hypothesis
tests with null and alternative hypotheses of the form $H_0 : k = k_0$ and
$H_1: k > k_0$, starting with $k_0 = 1$ and proceeding sequentially with
higher values of $k_0$ until a test fails to reject $H_0$. The gap statistic
method typifies this class of methods, with a test statistic that measures the
within-cluster dispersion relative to what is expected under a reference
distribution \citep{tibshirani2001estimating}.

Other authors have proposed choosing $k$ by using information
criteria.  For example, \citet{sugar2003finding} proposed an approach that
minimizes the estimated ``distortion'', the average distance per dimension.
Likewise, \citet{fraley2002model}'s model-based method fits Gaussian mixture model
models to the data, then selects the number of mixture components, $k$, using
the Bayesian Information Criterion (BIC).

A third set of approaches is based on the idea of ``stability'', that clusters
are meaningful if they manifest in multiple independent samples from the same
population. \citet{ben2001stability}, \citet{tibshirani2005cluster},
\citet{wang2010consistent} and \citet{fang2012selection} developed methods
based on this idea.

The procedure we propose in this report is based on a form of
cross-validation, and it is adaptive to the characteristics of the data
distribution. The essential idea is to devise a way to measure a form of
internal prediction error associated each choose of $k$, and then choose the
$k$ with the smallest associated error. We describe this method in detail in
Section~\ref{sec:meth}.  In Section~\ref{sec:self-consistent}, we prove that
our method is self-consistent. Then, in Section~\ref{sec:theory-gaussian}, we
analyze the performance of our method in the presence of Gaussian noise. The
theoretical analysis shows that the performance of our method degrades in the
presence of correlated noise; to fix this, we propose a correction for
correlated noise in Section~\ref{sec:corr-correct}. In
Sections~\ref{sec:simulations} and~\ref{sec:empirical-validation}, we
demonstrate that our method is competitive with other state-of-the-art
procedures in both simulated and real data sets. Then, in
Section~\ref{sec:application-yeast}, we apply our method to a Yeast cell cycle
dataset. We conclude with a short discussion in Section~\ref{sec:discussion}.

\section{Cross-validation for clustering}
\label{sec:meth}

\subsection{Problem statement}

Suppose that we are given a data matrix with $N$ rows and $P$ columns, and we
are tasked with choosing an appropriate number $k$ of clusters to use for
performing $k$-means clustering on the rows of the data matrix. Recall that
the $k$-means procedure takes a set of observations $\{ x_1, \dotsc ,x_n \}$
and finds a set of $k$ or cluster centers $A = \{ a_1, \dotsc, a_k \} $
minimizing the within cluster dispersion
\[
  W(A) = \sum_{i=1}^{n} \min_{a \in A} \|x_i - a\|^2.
\]
This implicitly defines a cluster assignment rule
\[
  g(x) = \argmin_{g \in \{1, \dotsc, k\}} \|x - a_g\|^2,
\]
with ties broken arbitrarily.

We can consider the problem of choosing $k$, the number of clusters, to be a
model selection problem. In other domains, especially supervised learning
problems like regression and classification, cross-validation is popular for
performing model selection.
In these settings, the data comes in the form of $N$
predictor-response pairs, $(X_1, Y_1), \dotsc, (X_N, Y_N)$, with $X_i \in
\R^{p}$ and $Y_i \in \R^{q}$.  The data can be represented as a matrix with
$N$ rows and $p + q$ columns.  We partition the data into $K$ hold-out
``test'' subsets, with $K$ typically chosen to be $5$ or $10$.  For each
``fold'' $r$ in the range $1, \dotsc, K$, we permute the rows of the data
matrix to get $\dataX$, a matrix with the $r$th test subset as its trailing
rows.  We partition $\dataX$ as
\[
  \dataX =
  \begin{bmatrix}
    \Xtrain & \Ytrain \\
    \Xtest  & \Ytest
  \end{bmatrix}.
\]
We use the training rows $[ \Xtrain\ \Ytrain ]$ to fit a regression model
$\hat Y = \hat Y(X)$, and then evaluate the performance of this model on the
test set, computing the cross-validation error $\|\Ytest - \hat Y(\Xtest)\|^2$
or some variant thereof.  We choose the model with the smallest
cross-validation error, averaged over all $K$ folds.

In unsupervised learning problems like factor analysis and clustering, the
features of the observations are not naturally partitioned into ``predictors''
and ``responses'', so we cannot directly apply the cross-validation procedure
described above.  For factor analysis, there are at least two versions of
cross-validation.  \citet{wold78cross} proposed a ``speckled'' holdout, where
in each fold we leave out a subset of the elements of the data matrix.  Wold's
procedure works well empirically, but does not have any theoretical support,
and it requires a factor analysis procedure that can handle missing data.
\citet{owen2009bi} proposed a scheme called ``bi-cross-validation'' wherein
each fold designates a subset of the data matrix columns to be response and a
subset of the rows to be test data.  This generalized a procedure due to
\citet{gabriel2002biblot}, who proposed holding out a single column and a
single row at each fold.


In the sequel, we extend Gabriel cross-validation to the problem of selecting
the number of clusters, $k$, automatically, and we provide theoretical and
empirical support analogous to the consistency results proved by
\citet{owen2009bi}.

\subsection{Gabriel cross-validation}
\label{sec:gabriel-cv-algorithm}

Our version of Gabriel cross validation for clustering works by performing a
sequence of ``folds'' over the data. We use these folds to estimate a version
of prediction error (cross-validation error) for each possible value of $k$;
we then choose the value $\hat k$ with the smallest cross-validation error.

In each fold of our cross-validation procedure, we permute the rows and
columns of the data matrix and then partition the rows and columns as $N = n +
m$ and $P = p + q$ for positive integers $n$, $m$, $p$, and $q$.  We treat
the first $p$ columns as ``predictors'' and the last $q$ columns as
``responses''; similarly, we treat the first $n$ rows as ``train''
observations and the last $m$ rows as ``test'' observations.  In block form,
the permuted data matrix is
\[
  \dataX
  =
  \begin{bmatrix}
    \Xtrain & \Ytrain \\
    \Xtest  & \Ytest
  \end{bmatrix},
\]
where
$\Xtrain \in \R^{n \times p}$,
$\Ytrain \in \R^{n \times q}$,
$\Xtest \in  \R^{m \times p}$,
and
$\Ytest \in  \R^{m \times q}$.

Given such a partition of $\dataX$, we perform four steps for each value of
$k$, the number of clusters:
\begin{enumerate}
  \item \label{step:gabriel-cluster}
    \textbf{Cluster:}
    Cluster $Y_{1}, \dotsc, Y_n$, the rows of $\Ytrain$, yielding the
    assignment rule $\hGY : \R^q \to \{ 1, \dotsc, k \}$ and the
    cluster means $\bmuY_1, \dotsc, \bmuY_k$.  Set $\hGY_i = \hGY(Y_i)$ to
    be the assigned cluster for row $i$.
  \item \label{step:gabriel-classify}
    \textbf{Classify:}
    Take $X_{1}, \dotsc, X_n$, the rows of $\Xtrain$ to be predictors,
    and take $\hGY_1, \dotsc, \hGY_n$ to be corresponding class labels.  Use
    the pairs $\{ (X_i, \hGY_i) \}_{i=1}^{n}$ to train a classifier
    $\hGX : \R^p \to \{ 1, \dotsc, k \}$.
  \item \label{step:gabriel-predict}
    \textbf{Predict:}
    Apply the classifier to $X_{n+1}, \dotsc, X_{n+m}$, the rows of
    $\Xtest$, yielding predicted classes $\hGX_i = \hGX(X_i)$ for
    $i = n+1, \dotsc, n+m$.  For each value of $i$ in this range, compute
    predicted response $\hat Y_i = \bmuY(\hGX_i)$, where
    $\bmuY(g) = \bmuY_g$.
  \item \label{step:gabriel-evaluate}
    \textbf{Evaluate:}
    Compute the cross-validation error
    \[
      \CV(k) = \frac{1}{m} \sum_{i=n+1}^{n+m} \|Y_i - \hat Y_i\|^2,
    \]
    where $Y_{n+1}, \dotsc, Y_{n+m}$ are the rows of $\Ytest$.
\end{enumerate}
\noindent
In principle, we could use any clustering and classification methods in
steps~\ref{step:gabriel-cluster} and~\ref{step:gabriel-classify}.  In this
report, we use $k$-means \citep{hartigan1979algorithm} as the clustering algorithm 
and develop the theoretical properties of the proposed method based on $k$-means. 
For the classification step, we compute the mean value of $X$ for each class; we assign an
observation to class $g$ if that class has the closest mean (randomly breaking
ties between classes).  The classification step is equivalent to linear
discriminant analysis with equal class priors and identity noise covariance
matrix.

To choose the folds, we randomly partition the rows and columns into $K$ and
$L$ subsets, respectively.  Each fold is indexed by a pair $(r,s)$ of
integers, with $r \in \{1, \dotsc, K\}$ and $s \in \{1, \dotsc, L\}$.  Fold
$(r,s)$ treats the $r$th row subset as ``test'', and the $s$th column subset
as ``response''.  We typically take $K = 5$ and $L = 2$.  For the number of
clusters, we select the value of $k$ that minimizes the average of $\CV(k)$
over all $K \times L$ folds (choosing the smallest value of $k$ in the event
of a tie).

In Section~\ref{sec:self-consistent}, we prove that this procedure is
self-consistent, in the sense that it recover the correct value of $k$ in the
absence of noise. Then, in Section~\ref{sec:theory-gaussian}, we analyze some
of the properties of Gabriel cross-validation in the presence of Gaussian
noise.

\section{Self-consistency}
\label{sec:self-consistent}

An important property of any estimation procedure is that in the absence of of
noise, the procedure correctly estimates the truth. This property is called
``self-consistency'' \citep{tarpey96}. We will now show that Gabriel
cross-validation is self-consistent. That is, in the absence of noise, the
Gabriel cross-validation procedure finds the optimal number of clusters.

It will suffice to prove self-consistency for a single fold of the
cross-validation procedure.  As in section~\ref{sec:gabriel-cv-algorithm} we
assume that the $P$ variables of the data set have been partitioned into $p$
predictor variables represented in vector~$X$ and $q$ response variables
represented in vector~$Y$.  The $N$ observations have been divided into two
sets: $n$ train observations and $m$ test observations.  We state the
assumptions for the self-consistency result in terms of a specific split; for
the result to hold in general, with high probability, these assumptions would
have to hold with high probability for a random split. The following theorem
gives conditions for Gabriel cross-validation to recover the true number of
clusters in the absence of noise.

\begin{proposition}\label{prop:self-consistency}

Let $\{ (X_i, Y_i) \}_{i=1}^{n+m}$ be the data from a single fold of Gabriel
cross-validation.  For any $k$, let $\CV(k)$ be the cross-validation error for
this fold, computed as described in Section~\ref{sec:gabriel-cv-algorithm}.
We will assume that there are $K$ true centers $\mu(1), \dotsc,\mu(K)$, with
the $g$th cluster center partitioned as $\mu(g) = \bigl(\muX(g),
\muY(g)\bigr)$ for $g = 1, \dotsc, K$.  Suppose that
\begin{enumerate}[label=(\roman*)]
  \item \label{asn:self-consistency-noiseless}
    Each observation $i$ has a true cluster $G_i \in \{ 1, \dotsc, K \}$.
    There is no noise, so that $X_i = \muX({G_i})$ and $Y_i = \muY(G_i)$ for
    $i = 1, \dotsc, n+m$.
  \item \label{asn:self-consistency-distinct-mux}
    The vectors $\muX(1), \dotsc,\muX(K)$ are all distinct.
  \item \label{asn:self-consistency-distinct-muy}
    The vectors $\muY(1), \dotsc,\muY(K)$ are all distinct.
  \item \label{asn:self-consistency-train}
    The training set contains at least one member of each cluster: for all $g$
    in the range $1, \dotsc, K$, there exists at least one $i$ in the range
    $1, \dotsc, n$ such that $G_i = g$.
  \item \label{asn:self-consistency-test}
    The test set contains at least one member of each cluster: for all $g$ in
    the range $1, \dotsc, K$, there exists at least one $i$ in the range $n+1,
    \dotsc, n+m$ such that $G_i = g$.
\end{enumerate}
Then $\CV(k) < \CV(K)$ for $k < K$, and $\CV(k) = \CV(K)$ for $k > K$, so that
Gabriel cross-validation correctly chooses $k = K$.
\end{proposition}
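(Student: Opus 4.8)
The plan is to reduce the stated conclusion to three facts: that $\CV(K) = 0$, that $\CV(k) = 0$ for every $k > K$, and that $\CV(k) > 0$ for every $k < K$. Together these show that the cross-validation error attains its minimum value, zero, exactly on the range $k \ge K$, so that the smallest-index tie-breaking rule selects $k = K$. Throughout I treat the $k$-means step as returning a global minimizer of the within-cluster dispersion $W$, and I record the following noiseless structure from assumptions~\ref{asn:self-consistency-noiseless}, \ref{asn:self-consistency-distinct-muy} and~\ref{asn:self-consistency-train}: each training response $Y_i$ equals $\muY(G_i)$, these values range over exactly the $K$ distinct vectors $\muY(1), \dots, \muY(K)$, and every one of them is realized by at least one training row. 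The analogous statement for the predictors, using~\ref{asn:self-consistency-distinct-mux}, is that the training predictors take exactly the $K$ distinct values $\muX(1), \dots, \muX(K)$.

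For the case $k \ge K$ I would first argue that a global minimizer has $W = 0$: placing one center at each distinct value $\muY(c)$ (and the remaining $k - K$ centers arbitrarily) already achieves zero dispersion, so every global optimum is zero-dispersion and hence has ``pure'' clusters, each of whose training rows share a single true label. This yields a map $g \mapsto c(g)$ with $\bmuY_g = \muY(c(g))$ for every nonempty cluster $g$, and assumption~\ref{asn:self-consistency-train} guarantees that every true cluster equals $c(g)$ for at least one $g$ (a bijection when $k = K$). Because the rows of cluster $g$ all have predictor $\muX(c(g))$, the class mean computed in step~\ref{step:gabriel-classify} is exactly $\muX(c(g))$. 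A test row $i$ with true cluster $c$ has $X_i = \muX(c)$, so by distinctness~\ref{asn:self-consistency-distinct-mux} its distance to class mean $g$ vanishes precisely when $c(g) = c$ and is strictly positive otherwise; thus $\hGX_i$ is forced to a cluster with $c(\hGX_i) = c$, any tie being broken among such clusters. The predicted response is then $\bmuY(\hGX_i) = \muY(c) = Y_i$, giving $\CV(k) = 0$.

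For the case $k < K$ I would avoid analyzing the classifier altogether and argue by counting. Each test prediction $\hat Y_i = \bmuY(\hGX_i)$ is one of the cluster response-means $\bmuY_1, \dots, \bmuY_k$, so it lies in a set $M$ of at most $k$ distinct vectors. Since $k < K$ and the true response centers $\muY(1), \dots, \muY(K)$ are $K$ distinct vectors by~\ref{asn:self-consistency-distinct-muy}, there is some $c_0$ with $\muY(c_0) \notin M$. By assumption~\ref{asn:self-consistency-test} the test set contains a row $i_0$ with true cluster $c_0$, and for it $\hat Y_{i_0} \in M$ while $Y_{i_0} = \muY(c_0) \notin M$, so $\|Y_{i_0} - \hat Y_{i_0}\|^2 > 0$ and hence $\CV(k) > 0$. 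This establishes all three facts and the conclusion.

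The step I expect to be most delicate is the $k \ge K$ analysis, because the $k$-means optimum need not be unique and both the clustering step (for $k > K$, where extra centers may duplicate a true center and split one true cluster into several labels) and the classifier (where two class means may coincide) can produce ties. The main work is therefore to verify that the conclusion is invariant to these choices: every label $g$ sharing a given true cluster $c$ carries the same response-mean $\muY(c)$, so no matter how ties are resolved the predicted response is correct. A secondary point worth stating explicitly is the reliance on $k$-means returning a genuine global minimizer of $W$ in the noiseless setting, without which even $\CV(K) = 0$ could fail.
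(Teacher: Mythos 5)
Your proof is correct and follows essentially the same decomposition as the paper's, which establishes the result via two lemmas showing $\CV(k) > 0$ for $k < K$ (by your exact counting argument on the at most $k$ distinct predicted values versus the $K$ distinct true response centers realized in the test set) and $\CV(k) = 0$ for $k \ge K$ (by tracing pure zero-dispersion clusters through the classification step). Your treatment of the $k > K$ case is in fact somewhat more careful than the paper's, which disposes of duplicated centers and tie-breaking with a ``without loss of generality''; note also that you correctly prove $\CV(k) > \CV(K)$ for $k < K$, whereas the inequality as printed in the proposition statement has the direction reversed.
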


The proposition states that our method works well in the absence of noise,
when each observation is equal to its cluster center.  The essential
assumption here is assumption~\ref{asn:self-consistency-noiseless}, which states
that there is no noise.  If we are willing to assume, say, that the cluster
centers~$\mu(g) = \bigl(\muX(g),\muY(g)\bigr)$ for $g = 1, \dotsc, K$ were
randomly drawn from a distribution with a density over $\R^{p+q}$, then
assumptions~\ref{asn:self-consistency-distinct-mux}
and~\ref{asn:self-consistency-distinct-muy} will hold with probability one for
all splits of the data. Likewise, if the clusters are not too small (relative
to $n$ and $m$), then assumptions~\ref{asn:self-consistency-train}
and~\ref{asn:self-consistency-test} will likely hold for a random split of the
data into test and train.

Proposition~\ref{prop:self-consistency} follows from
Lemmas~\ref{lem:self-consistency1} and~\ref{lem:self-consistency2}, which we
now state and prove.

\begin{lemma}\label{lem:self-consistency1}
Suppose that the assumptions of Proposition~\ref{prop:self-consistency} are in
force.  If $k < K$, then $\CV(k) > 0$.
\end{lemma}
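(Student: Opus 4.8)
The plan is to prove this by a cardinality (pigeonhole) argument comparing the number of distinct values the predictions $\hat Y_i$ can take against the number of distinct true responses among the test rows. First I would pin down the test targets. By assumption~\ref{asn:self-consistency-noiseless} there is no noise, so each test response satisfies $Y_i = \muY(G_i)$; by assumption~\ref{asn:self-consistency-test} every cluster $g \in \{1,\dotsc,K\}$ is represented among the test rows; and by assumption~\ref{asn:self-consistency-distinct-muy} the vectors $\muY(1),\dotsc,\muY(K)$ are pairwise distinct. Hence the set of true test targets $\{ Y_i : i = n+1,\dotsc,n+m \}$ is exactly $\{ \muY(1),\dotsc,\muY(K) \}$, a set of precisely $K$ distinct vectors.

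Next I would bound how many distinct values the predictions can attain. By the construction in Step~\ref{step:gabriel-predict}, each prediction is $\hat Y_i = \bmuY(\hGX_i) = \bmuY_{\hGX_i}$ with $\hGX_i \in \{1,\dotsc,k\}$, so every $\hat Y_i$ lies in the set $\{ \bmuY_1,\dotsc,\bmuY_k \}$ of cluster means produced in Step~\ref{step:gabriel-cluster}. That set has at most $k$ distinct elements, no matter how $k$-means and the subsequent classifier behave. The conclusion then follows by contradiction: suppose $\CV(k) = 0$. Since $\CV(k) = \frac{1}{m} \sum_{i=n+1}^{n+m} \|Y_i - \hat Y_i\|^2$ is an average of nonnegative terms, this forces $\hat Y_i = Y_i$ for every test row $i$. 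But then the $K$-element set of true targets is contained in the at-most-$k$-element set of attainable predictions, so $K \le k$, contradicting $k < K$. Therefore $\CV(k) > 0$.

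I do not expect a genuine technical obstacle here: the heart of the argument is the purely combinatorial observation that the prediction pipeline can emit at most $k$ distinct response vectors while the noiseless test responses realize all $K$ distinct true centers, so a perfect fit is impossible once $k < K$. The only place I would take care is the bookkeeping around degenerate clusterings: even if $k$-means returns fewer than $k$ nonempty clusters (or coincident means), the set $\{\bmuY_1,\dotsc,\bmuY_k\}$ still contains at most $k$ members, so the cardinality bound is unaffected. It is worth noting that assumptions~\ref{asn:self-consistency-distinct-mux} and~\ref{asn:self-consistency-train} play no role in this lemma; they will instead be needed for the companion result Lemma~\ref{lem:self-consistency2}, which handles the $k \ge K$ case.
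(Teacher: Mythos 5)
Your proof is correct and follows essentially the same route as the paper's: both rest on the pigeonhole observation that the noiseless test responses realize all $K$ distinct centers $\muY(1), \dotsc, \muY(K)$ while the predictions $\hat Y_i = \bmuY(\hGX_i)$ can take at most $k < K$ distinct values, so some summand of $\CV(k)$ must be positive. Your added care about degenerate clusterings and your note on which assumptions are actually used are accurate but not a departure from the paper's argument.
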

\begin{proof}
By definition,
\[
  \CV(k)
    =
      \sum_{i=n+1}^{n+m}
        \| Y_i - \bmuY (\hGX_i) \|^2,
\]
where $\bmuY(g)$ is the center of cluster $g$ returned from applying $k$-means
to $Y_1, \dotsc, Y_n$.  Assumptions~\ref{asn:self-consistency-noiseless}
and~\ref{asn:self-consistency-test}, imply that as $i$ ranges over the test
set $n+1, \dotsc, n+m$, the response $Y_i$ ranges over all distinct values in
$\{ \muY(1), \dotsc, \muY(K) \}$.
Assumption~\ref{asn:self-consistency-distinct-muy} implies that there are
exactly $K$ such distinct values.  However, there are only $k$ distinct values
of $\bmuY(g)$.  Thus, at least one summand
\(
  \| Y_i - \bmuY(\hGX_i) \|^2
\)
is nonzero.  Therefore,
\(
  \CV(k) > 0.
\)
\end{proof}

\begin{lemma}\label{lem:self-consistency2}
Suppose that the assumptions of Proposition~\ref{prop:self-consistency} are in
force.  If $k \geq K$, then $\CV(k) = 0$.
\end{lemma}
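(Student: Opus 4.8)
The plan is to show that when $k \ge K$ the entire Cluster--Classify--Predict pipeline reproduces the test responses exactly, so that every summand of $\CV(k)$ vanishes. I would organize the argument in three stages that mirror the three substantive steps of the algorithm, tracking the correspondence between the cluster labels returned by $k$-means and the true labels $G_i$.

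First I would analyze the clustering step. By assumptions~\ref{asn:self-consistency-noiseless},~\ref{asn:self-consistency-distinct-muy}, and~\ref{asn:self-consistency-train}, the training responses $Y_1, \dots, Y_n$ take exactly $K$ distinct values, $\muY(1), \dots, \muY(K)$, each of which is attained. Since $k \ge K$, the clustering that places each distinct value in its own cluster (leaving $k-K$ clusters empty) has within-cluster dispersion $0$, which is the global minimum of the $k$-means objective. Hence the solution returned in step~\ref{step:gabriel-cluster} has zero dispersion, so $Y_i = \bmuY(\hGY_i)$ for every training point. Consequently each nonempty cluster $g$ consists of points sharing a single true label, and I can define a map $\tau$ with $\bmuY_g = \muY(\tau(g))$ and $\hGY_i = g \implies G_i = \tau(g)$. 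Because every true label appears in the training set and the $\muY(h)$ are distinct, $\tau$ is surjective onto $\{1, \dots, K\}$.

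Next I would pin down the classifier from step~\ref{step:gabriel-classify}. Its class mean of $X$ for cluster $g$ averages over training points with $\hGY_i = g$, all of which satisfy $X_i = \muX(\tau(g))$ by the noiselessness assumption~\ref{asn:self-consistency-noiseless}; hence that class mean equals $\muX(\tau(g))$ exactly. For the prediction step, take any test point $i$ with true label $G_i = h$, so $X_i = \muX(h)$. By surjectivity of $\tau$ some class has $X$-mean equal to $\muX(h)$, giving squared distance $0$, while by distinctness of the $\muX(h)$ (assumption~\ref{asn:self-consistency-distinct-mux}) no other class attains zero distance. Thus the nearest-mean rule returns some $\hGX_i = g$ with $\tau(g) = h$---even under arbitrary tie-breaking, since every minimizing class is correct---so $\hat Y_i = \bmuY(\hGX_i) = \muY(h) = Y_i$. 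Every summand vanishes and $\CV(k) = 0$.

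The main obstacle is the first stage: justifying that the $k$-means output attains zero within-cluster dispersion rather than merely a local minimum, and cleanly handling the bookkeeping between returned cluster labels and true labels (including empty clusters when $k > K$) through the surjection $\tau$. Once this correspondence is in place, the classification and prediction stages follow immediately from the distinctness assumptions and the exact-match structure of the noiseless data.
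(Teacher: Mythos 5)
Your proposal is correct and follows essentially the same route as the paper's proof: the noiseless training responses force the $k$-means solution to reproduce the true $Y$-centers, the class means of $X$ then equal the true $X$-centers, and distinctness of the $\muX(g)$ makes the nearest-mean classifier return the (a) correct label, so every test summand vanishes. Your bookkeeping via the surjection $\tau$ is in fact more careful than the paper's ``without loss of generality'' relabeling, since it explicitly handles empty or duplicated clusters when $k > K$ and the resulting ties in the classifier, but the underlying argument is the same.
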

\begin{proof}
From assumptions~\ref{asn:self-consistency-noiseless},
\ref{asn:self-consistency-distinct-muy},
and~\ref{asn:self-consistency-train}, we know the cluster centers
gotten from applying $k$-means to $Y_1, \dotsc, Y_n$ must include
$\muY(1), \dotsc, \muY(K)$.  Without loss of generality, suppose that
$\bmuY(g) = \muY(g)$ for $g = 1, \dotsc, K$.  This implies that
$\hGY_i = G_i$ for $i = 1, \dotsc, n$.  Thus, employing
assumption~\ref{asn:self-consistency-noiseless} again, we get that
$\bmuX(g) = \muX(g)$ for $g = 1, \dotsc, K$.

Since assumption~\ref{asn:self-consistency-distinct-mux} ensures that
$\muX(1), \dotsc, \muX(K)$ are all distinct, we must have that $\hGX_i = G_i$
for all $i = 1, \dotsc, m+n$.  In particular, this implies that $\bmuY(\hGX_i)
= Y_i$ for $i = 1, \dotsc, m+n$, so that $\CV(k) = 0$.
\end{proof}

\section{Analysis under Gaussian noise}
\label{sec:theory-gaussian}

\subsection{Single cluster in two dimensions}

Proposition~\ref{prop:self-consistency} tells us that the Gabriel cross-validation
method recovers the true number of clusters when the noise is negligible.
While this result gives us some assurance that the procedure is well-behaved,
we can bolster our confidence and gain insight into its workings by analyzing
its behavior in the presence of noise.  We first study the case of a single
cluster in two dimensions with correlated Gaussian noise.

\begin{proposition}\label{prop:single-2d}
Suppose that $\{ (X_i, Y_i) \}_{i=1}^{n + m}$ is data from a single fold
of Gabriel cross-validation, where each $(X,Y)$ pair in $\R^2$ is an
independent draw from a mean-zero multivariate normal distribution with unit
marginal variances and correlation $\rho$.  In this case, the data are drawn
from a single cluster; the true number of clusters is~$1$.  If $|\rho| < 1/2$,
and $k > 1$, then $\CV(1) < \CV(k)$ with probability tending to one as $m$
and $n$ increase.
\end{proposition}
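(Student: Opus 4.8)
The plan is to pass to the large-sample limit of the single fold and reduce the claim to a deterministic inequality about the standard bivariate normal. With $p = q = 1$, the $k=1$ predictor is the training mean $\bar Y_{\mathrm{train}}$, so $\CV(1) = \frac1m\sum_{i>n}(Y_i - \bar Y_{\mathrm{train}})^2 \to \E[Y^2] = 1$ in probability as $n,m\to\infty$ by the law of large numbers. For $k>1$ I would first invoke the strong consistency of $k$-means to argue that the empirical cluster centers and the induced decision boundaries converge to those of a population-optimal $k$-level quantizer of the $N(0,1)$ law of $Y$; the midpoint (LDA) classifier on $X$ then converges to a fixed step rule, and a second application of the law of large numbers gives $\CV(k) \to \E[(Y - h(X))^2]$ for a fixed step function $h$ of $X$. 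The goal then becomes showing $\E[(Y-h(X))^2] > 1$ whenever $|\rho| < 1/2$.

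Next I would make $h$ explicit. By the symmetry $X \mapsto -X$ I may assume $\rho \ge 0$. A population $k$-means solution partitions $\R$ into intervals $(t_{g-1}, t_g)$ with centroids $c_g = \E[Y \mid t_{g-1} < Y < t_g]$ and midpoint thresholds $t_g = (c_g + c_{g+1})/2$; since $\E[X \mid Y] = \rho Y$, the class-conditional $X$-means are $\rho c_g$, so the classifier boundaries sit at $\rho t_g$ and $h(X) = c_g$ for $X \in (\rho t_{g-1}, \rho t_g)$. Using $\E[Y \mid X] = \rho X$ gives $\E[Y\,h(X)] = \rho\,\E[X\,h(X)]$, so that $\E[(Y - h(X))^2] - 1 = \E[h(X)^2] - 2\rho\,\E[X\,h(X)]$, and it suffices to prove this is positive.

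I would finish with a rescaling argument. The map $\alpha \mapsto \E[(Y - \alpha h(X))^2]$ is an upward parabola minimized at $\alpha^\ast = \rho\,\E[X h]/\E[h^2]$, and $\E[(Y-h)^2] > \E[Y^2]$ holds exactly when $\alpha^\ast < 1/2$. I would bound $\alpha^\ast$ by showing $\E[X h] \le \E[h^2]$, which reduces cluster by cluster (the straddling cluster being symmetric with mean $0$) to the inequality $\bigl|\,\E[X \mid \rho t_{g-1} < X < \rho t_g]\,\bigr| \le |c_g|$: contracting a one-sided interval toward the origin by the factor $\rho \in (0,1)$ cannot increase the magnitude of the standard-normal conditional mean, since the truncated-normal mean is monotone in each endpoint. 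This yields $\alpha^\ast \le \rho < 1/2$ and hence $\CV(k) > \CV(1)$ in the limit. For $k=2$ one has $h(X) = \sqrt{2/\pi}\,\operatorname{sign}(X)$ and $\alpha^\ast = \rho$ exactly, which shows the constant $1/2$ is sharp; the degenerate case $\rho = 0$ is immediate, since then $\E[Y h] = 0$ and the limit is $1 + \E[h^2] > 1$.

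I expect the main obstacle to be the passage to the limit rather than the final inequality: I must ensure the empirical $k$-means boundaries converge to a population stationary partition and that the estimated predictor may be replaced by its limit inside the test average. I would sidestep any need for uniqueness of the optimal quantizer by observing that every subsequential limit of the empirical solution satisfies the centroid and midpoint conditions, and that the inequality $\E[X h] \le \E[h^2]$ uses only those two conditions; hence it holds for whichever stationary $h$ arises, and the conclusion $\CV(1) < \CV(k)$ with probability tending to one follows.
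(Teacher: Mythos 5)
Your argument is correct and takes essentially the same route as the paper's proof: both pass to the population limit via Pollard's consistency of $k$-means, identify the limiting classifier as the population $k$-means partition of $Y$ with its boundaries contracted by the factor $\rho$ on the $X$-axis, and reduce the claim to the per-cluster fact that this contraction cannot increase the magnitude of the truncated-normal conditional mean, combined with $\rho < 1/2$. Your parabola-in-$\alpha$ packaging of the last step is algebraically equivalent to the paper's termwise bound $a_j(a_j - 2\hat{\mu}^Y_j) > 0$, and your remarks on sharpness at $k=2$ and on subsequential limits are welcome additions but do not change the substance.
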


\begin{proof}

Throughout the proof we will assume that $\rho \geq 0$; a similar argument
holds with minor modification when $\rho < 0$.

Set $\hGY_1, \dotsc, \hGY_n$ to be the cluster labels gotten from applying
$k$-means to $Y_1, \dotsc, Y_n$. Denote the cluster means by $\bmuY_1 \leq
\bmuY_2 \leq \dotsb \leq \bmuY_k$.  Pollard's \citeyearpar{pollard1981strong}
strong consistency theorem for $k$-means implies that for large $n$, the
cluster centers are close to population clusters centers $a_1 < a_2 < \dotsb <
a_k$. Specifically, $\bmuY_j = a_j + \OhP(n^{-1/2})$.  Since the distribution
of $Y$ is symmetric, the population centers $a_1, a_2, \dotsc, a_k$ are
symmetric about the origin.

For $j$ in $1, \dotsc, k$, set
\[
  \bmuX_j = \frac{\sum_{i=1}^{n} 1\{ \hGY_i = j \} X_i}
                 {\sum_{i=1}^{n} 1\{ \hGY_i = j \}}.
\]
The classification rule $\hGX$ is defined by
\(
  \hGX(X) = \argmin_j \| \bmuX_j - X \|.
\)
Denote the boundaries between the population clusters as
$b_j = (a_j + a_{j+1})/2$ for $j = 1, \dotsc, k-1$.
Set $b_0 = -\infty$ and $b_k = +\infty$. 
Then, $\bmuX_j$ is within $\OhP(n^{-1/2})$ of the following expectation:
\begin{align*}
  \E(X \mid b_j \leq Y \leq b_{j+1})
    &= \E\{ E(X \mid Y) \mid b_j \leq Y \leq b_{j+1}\} \\
    &= \E\{ \rho Y \mid b_j \leq Y \leq b_{j+1}\} \\
    &= \rho \E\{ Y \mid b_j \leq Y \leq b_{j+1}\} \\
    &= \rho a_j.
\end{align*}
That is, $\bmuX_j = \rho a_j + \OhP(n^{-1/2})$. For $j = 1, \dotsc, k-1$,
the boundary between sample the classification based on $X$ to
labels $j$ and $j+1$ is
$(\bmuX_j + \bmuX_{j+1}) /2 = \rho b_j + \OhP(n^{-1/2})$.

Set $\hGX_i = \hGX(X_i)$.
The cross-validation error is
\begin{align*}
  \CV(k)
  &= \frac{1}{m} \sum_{i=n+1}^{n+m} \|Y_i - \hat Y_i\|^2
  \\
  &= \frac{1}{m} \sum_{i=n+1}^{n+m}
  \sum_{j=1}^{k} \|Y_i - \bmuY_j\|^2 1\{ \hGX_i = j \}
  \\
  &=
  \sum_{j=1}^{k}
    \pi_j
    \E[ \|Y - a_j\|^2 \mid \rho b_j < X < \rho b_{j+1}]
  + \OhP(n^{-1/2})
  + \OhP(m^{-1/2}),
\end{align*}
where $\pi_j = \Pr(\rho b_j < X < \rho b_{j+1}).$

For $j = 1, \dotsc, k$,
set
\(
  \hmuY_j =
  \E[ Y  \mid \rho b_j < X < \rho b_{j+1}].
\)
Note that
\[
  \CV(1)
    = 
  \sum_{j=1}^{k}
    \pi_j
    \E[ \|Y - 0\|^2 \mid \rho b_j < X < \rho b_{j+1}]
  + \OhP(n^{-1/2})
  + \OhP(m^{-1/2}).
\]
Thus, the difference in cross-validation errors is
\begin{align*}
\CV(k) - \CV(1)
  &=
  \sum_{j=1}^{k}
    \pi_j
    a_j
    (a_j - 2 \hmuY_j)
  + \OhP(n^{-1/2})
  + \OhP(m^{-1/2}).
\end{align*}
For arbitrary $j$,
\begin{align*}
  \hmuY_j &=
  \E[ E(Y\mid X)  \mid \rho b_j < X < \rho b_{j+1}]
\\
  &= \rho \E[ X  \mid \rho b_j < X < \rho b_{j+1}].
\end{align*}
Since $0 \leq \rho \leq 1$,
in cases where $0 \leq b_j < b_{j+1}$, we have that
$\hmuY_j \leq \rho a_j$; similarly, when $b_j < b_{j+1} \leq 0$, it follows
that $\hmuY_j \geq \rho a_j$. In either of these two situations, if $\rho <
1/2$, then
\[
  a_j (a_j - 2 \hmuY_j) > 0.
\]
The last situation to consider is when $b_j < 0 < b_{j+1}$, in which case
$b_j = - b_{j+1}$; here, $\hmuY_j = a_j = 0$.
Putting this all together, we have that as $n$ and $m$ tend to infinity,
the probability that
\(
  \CV(k) > \CV(1)
\)
tends to one.
\end{proof}

We confirm the result of Proposition~\ref{prop:single-2d} with a simulation.
We perform $10$ replicates.  In each replicate, we generate $20000$
observations from a mean-zero bivariate normal distribution with unit marginal
variances and correlation $\rho$.  We perform a single $2 \times 2$ fold of
Gabriel cross-validation and report the cross-validation mean squared error
for the number of clusters $k$ ranging from $1$ to $5$.
Figure~\ref{fig:nullcorr-equal} shows the cross-validation errors for all $10$
replicates.  The simulation demonstrates that in the Gabriel cross-validation
criterion chooses the correct answer $k = 1$ whenever $\rho < 0.5$; the
criterion chooses $k \geq 2$ clusters whenever $|\rho| > 0.5$.

\begin{figure}
\centering
\includegraphics[width=3in]{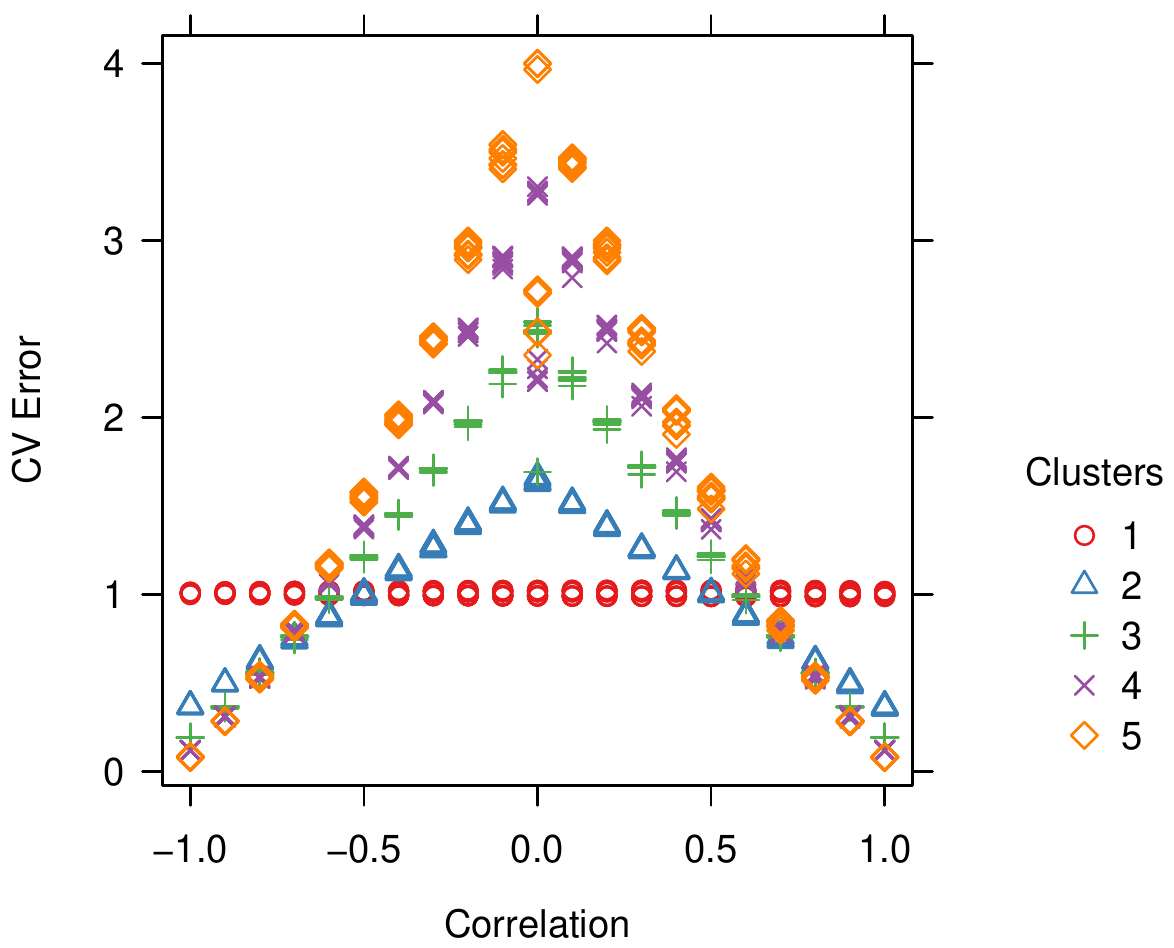}
\caption{Cross-validation error on $10$ replicates, with the number of
clusters $k$ ranging from $1$ to $5$.  Data is generated from two-dimensional
multivariate normal distribution with correlation $\rho$.  The Gabriel
cross-validation criterion chooses the correct answer $k = 1$ whenever
$|\rho| < 0.5$; the criterion chooses $k \geq 2$ clusters whenever $|\rho| > 0.5$.}
\label{fig:nullcorr-equal}
\end{figure}

Intuitively, when the correlation is high, the response feature, $Y$, looks
similar to the predictor feature, $X$. Prediction error on $X$ always
decreases with larger $k$. Thus, when the correlation is high, the prediction
error for $Y$ will also decrease with larger $k$. This explains why
cross-validation breaks down in the presence of strong correlation.

In Appendix~\ref{sec:single-general-dim}, using similar techniques to those used
to prove Proposition~\ref{prop:single-2d}, we derive an analogous result for correlated
Gaussian noise in more than two dimensions. A similar phenomenon holds:
the Gabriel cross-validation method fails when the first principal component
of the $Y$ variables is strongly correlated with a linear combination of the
$X$ variables.

Proposition~\ref{prop:single-2d} tells us that Gabriel cross-validation fails
when there is strong correlation between the variables. To get around this, in
practice we will transform the data to reduce correlation before performing
cross-validation. We detail this approach in Section~\ref{sec:corr-correct}.

\subsection{Two clusters in two dimensions}

We will now analyze a simple two-cluster setting, and derive conditions for
Gabriel cross-validation to correctly prefer $k=2$ clusters to $k=1$.  The
main assumption of the proposition is that the cluster centers are not too
close. The precise definition of ``too close'' is stated in terms of
$\Phi(\cdot)$ and $\varphi(\cdot)$, the standard normal cumulative
distribution function and density, respectively.  The inequality is hard to
interpret directly, but we show the boundary between ``too close'' and ``well
separated'' in Fig.~\ref{fig:overlap-color_plot}, after the proof of the
proposition.

\begin{proposition}\label{prop:twoclust}
Suppose that $\{(X_i,Y_i)\}_{i=1}^{n+m}$ is data from a single fold of Gabriel
cross-validation, where each $(X,Y)$ pair in $\R^2$ is an independent draw
from an equiprobable mixture of two multivariate normal distributions with
identity covariance. Suppose that the first mixture component has mean
$\mu = (\muX, \muY)$ and the second has mean $-\mu = (-\muX, -\muY)$,
where $\muX \geq 0$ and $\muY \geq 0$. 
If the cluster centers are well separated, specifically such that
\(
  2 \varphi(\muY) + \muY + 2 \muY \Phi(\muY) < 4 \muY \Phi(\muX),
\)
then $\CV(2) < \CV(1)$ with probability tending to one as $m$ and $n$ increase.
\end{proposition}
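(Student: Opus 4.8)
The plan is to follow the same template as the proof of Proposition~\ref{prop:single-2d}, replacing the single correlated Gaussian by the two-component mixture and exploiting symmetry about the origin. First I would invoke Pollard's strong consistency theorem to replace the sample $k$-means centers computed from $Y_1, \dotsc, Y_n$ by their population analogues up to $\OhP(n^{-1/2})$ errors, and to replace the test-set averages by population expectations up to $\OhP(m^{-1/2})$ errors, exactly as in the earlier proof. Since the marginal law of $Y$ is the symmetric mixture $\tfrac12 N(\muY,1) + \tfrac12 N(-\muY,1)$, the population $k$-means solution with $k=2$ splits at $Y=0$ and has centers $\pm a$, where $a = \E[Y \mid Y \geq 0]$; I would evaluate this truncated mean using the Gaussian truncated-moment identities to get $a = 2\varphi(\muY) + 2\muY\Phi(\muY) - \muY$.

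Next I would work out the classification step. Because within each mixture component $X$ and $Y$ are independent, conditioning on the component label is the right tool: the cluster-conditional means of $X$ are $\bmuX_2 = \E[X \mid Y \geq 0] = \muX\,(2\Phi(\muY) - 1)$ and $\bmuX_1 = -\bmuX_2$, so in the well-separated regime ($\muX, \muY > 0$) the nearest-mean classifier $\hGX$ has its decision boundary at $X = 0$: it predicts $\hat Y = a$ when $X \geq 0$ and $\hat Y = -a$ when $X < 0$. This reduces $\CV(2)$ to $\tfrac12\E[(Y-a)^2 \mid X \geq 0] + \tfrac12\E[(Y+a)^2 \mid X < 0] + \OhP(n^{-1/2}) + \OhP(m^{-1/2})$, while the single-cluster case gives the trivial predictor $\hat Y = 0$ and hence $\CV(1) = \E[Y^2] + \OhP(n^{-1/2}) + \OhP(m^{-1/2}) = 1 + \muY^2 + \OhP(n^{-1/2}) + \OhP(m^{-1/2})$.

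Finally I would form the difference. Expanding the squares and using the symmetry $\E[Y\,1\{X<0\}] = -\E[Y\,1\{X\geq 0\}]$, the $\E[Y^2]$ terms cancel against $\CV(1)$ and everything collapses to
\[
  \CV(2) - \CV(1) = a^2 - 4a\,\E[Y\,1\{X\geq 0\}] + \OhP(n^{-1/2}) + \OhP(m^{-1/2}).
\]
Since $a > 0$, the leading term is negative precisely when $a < 4\,\E[Y\,1\{X\geq 0\}]$. Computing this last expectation by again conditioning on the component gives $\E[Y\,1\{X\geq 0\}] = \muY\Phi(\muX) - \muY/2$, so the inequality becomes $a + 2\muY < 4\muY\Phi(\muX)$, which is exactly the stated separation condition $2\varphi(\muY) + \muY + 2\muY\Phi(\muY) < 4\muY\Phi(\muX)$. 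Letting $n, m \to \infty$ then sends the $\OhP$ terms to zero and yields $\CV(2) < \CV(1)$ with probability tending to one.

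The main obstacle I anticipate is the first half: correctly identifying and evaluating the population $k$-means center $a$. Unlike the single-Gaussian case of Proposition~\ref{prop:single-2d}, the conditional mean $\E[X \mid Y]$ is no longer linear, so the linear-regression shortcut is unavailable and the computation must instead route through component-conditioning and truncated Gaussian moments. I would also need to verify that the symmetric split is the genuine population minimizer, so that Pollard's theorem applies with a well-defined limit, and to dispatch the degenerate boundary cases $\muX = 0$ or $\muY = 0$ separately, where $X$ carries no cluster information and the stated inequality correctly fails.
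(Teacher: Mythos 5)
Your proposal is correct and follows essentially the same route as the paper's proof: population $k$-means centers $\pm a$ with $a = 2\varphi(\muY) + 2\muY\Phi(\muY) - \muY$ obtained from truncated-normal moments, a classification boundary at $X=0$, and a reduction of $\CV(2)-\CV(1)$ to the sign of $a + 2\muY - 4\muY\Phi(\muX)$. The only differences are cosmetic (you expand the difference directly via $\E[Y\,1\{X\ge 0\}]$ rather than computing each expected summand in full, and you flag the verification that the symmetric split is the population optimum, which the paper simply asserts).
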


\begin{proof}
There are two clusters: observations from cluster~$1$ are distributed as
\(
  \mathcal{N}(\mu, I)
\)
and observations from cluster~$2$ are distributed as
\(
\mathcal{N}(-\mu, I)
\)
where $\mu = (\muX, \muY)$.  Without loss of generality, $\muX \geq 0$ and
$\muY \geq 0$.  Let $G_i$ be the true cluster of observation~$i$
where, by assumption,
\[
  \Pr(G_i=1) = \Pr(G_i=2) = 1/2.
\]
After applying $k$-means to $\{ Y_i\}_{i=1}^{n}$ with $k=2$, if $n$ is large
enough, then the estimated cluster means $\bar{\mu}^Y_1$ and $\bar{\mu}^Y_2$
will be close to $\E(Y \mid Y>0)$ and $\E(Y \mid Y < 0)$, with errors of
size $\OhP(n^{-1/2})$. To compute these quantities, let
$(X_1, Y_1) \sim \mathcal{N}(\mu, I)$ and $(X_2, Y_2) \sim \mathcal{N}(-\mu, I)$
be draws from the mixture components, and let $(X,Y)$ be defined such that
$\Pr(X= X_1, Y = Y_1) = \Pr(X= X_2, Y = Y_2) = 1/2$. Then,
\begin{align*}
  \E(Y \mid Y>0)
  &= \E(Y_1 \mid Y_1 > 0) \cdot \Pr(Y = Y_1 \mid Y >0)
   + \E(Y_2 \mid Y_2 > 0) \cdot \Pr(Y = Y_2 \mid Y > 0)
\\
&=  \{ \muY + \varphi(\muY) / \Phi(\muY) \} \cdot \Phi(\muY)
+ [-\muY + \varphi(\muY) / \{ 1 - \Phi(\muY) \}] \cdot \{ 1 - \Phi(\muY) \}
\\
  &= 2\varphi(\muY)+ 2\muY\Phi(\muY)- \muY. 
\end{align*}
In the second line, we have used
Lemma~\ref{lem:truncated-normal-moments} from
Appendix~\ref{app:technical-lemmas} to compute the conditional expectations;
$\varphi()$ and $\Phi()$ are the standard normal density and
cumulative distribution function, respectively. By symmetry,
\begin{align*}
  \E(Y \mid Y<0) &= -\E(Y \mid Y > 0).
\end{align*}

The classification rule learned from the training data
$\{(X_i, \hGY_i)\}_{i=1}^{n}$ will have its decision boundary at $0 + \OhP(n^{-1/2})$;
that is, in the limit,
observations will get classified as coming from cluster~1 when $X > 0$. 
Set $a = \E(Y \mid Y > 0)$.
Up to terms of order $\OhP(n^{-1/2})$,
the cross-validation error from a single observation is distributed as
\[
  (Y - a)^2 1\{ X > 0 \} + (Y + a)^2 1\{X < 0\}.
\]
Using the fact that conditional on the mixture component, the $X$ and $Y$
coordinates are independent, we can compute the expectation of the first
summand as
\begin{align*}
  \E[(Y - a)^2 1\{ X > 0 \}]
  &= (1/2) \E[(Y_1 - a)^2] \Pr(X_1 > 0)
  + (1/2) \E[(Y_2 - a)^2] \Pr(X_2 > 0)
\\
&= (1/2) \Phi(\muX) \E(Y_1 - a)^2  + (1/2) \{ 1 - \Phi(\muX) \} \E(Y_2 - a)^2
\\
&=
(1/2) [ 1  + \Phi(\muX) (\muY - a)^2 + \{ 1 - \Phi(\muX) \} (-\muY - a)^2 ].
\end{align*}
By a similar calculation, the expectation of the second summand is
\[
  \E[(Y + a)^2 1\{ X < 0 \}]
  =
  (1/2)[ 1 + \{ 1 - \Phi(\muX) \} (\muY + a)^2 + \Phi(\muX) (-\muY + a)^2 ].
\]
Adding the two terms, we get that the expected cross-validation error from a
single observation is
\[
  1 + \Phi(\muX) (\muY - a)^2 + \{ 1 - \Phi(\muX) \} (\muY + a)^2
    =
  1 + (\muY)^2 + a \, \{  a + 2 \muY - 4 \muY \Phi(\muX) \}.
\]
Thus, the $k = 2$ cross-validation error on the test set is
\begin{align*}
  \CV(2)
  &=
    \frac{1}{m}
    \sum_{i=n+1}^{n+m}
      \| (Y_i - \bmuY_1) 1\{\hGX_i = 1\} \|^2
      +
      \| (Y_i - \bmuY_2) 1\{\hGX_i = 2\} \|^2
\\
&=
1 + (\muY)^2 +  a \, \{ a + 2 \muY - 4 \muY \Phi(\muX) \}
+ \OhP(n^{-1/2}) + \OhP(m^{-1/2}).
\end{align*}

When $k=1$, the $k$-means centroid is equal to the sample mean $\bar Y_n = (1/n)
\sum_{i=1}^{n} Y_i$, approximately equal to 
$\E(Y) = 0$, with error of size $\OhP(n^{-1/2})$. The cross-validation error is
\[
  \CV(1) = \frac{1}{m} \sum_{i=n+1}^{n+m} \| Y_i - \bar Y_n \|^2
         = 1 + (\mu^Y)^2 + \OhP(m^{-1/2}) + \OhP(n^{-1/2}).
\]
Thus, if $a + 2 \muY - 4 \muY \Phi(\muX) < 0$, then $\CV(2) < \CV(1)$ with
probability tending to one as $m$ and $n$ increase. Substituting the
expression for $\E(Y \mid Y > 0)$ in place of $a$, the inequality holds
precisely when
\(
  2 \varphi(\muY) + \muY + 2 \muY \Phi(\muY) < 4 \muY \Phi(\muX).
\)
\end{proof}
We confirm the result of Proposition~\ref{prop:twoclust} with a simulation.
We perform $10$ replicates for each $(\mu^X, \mu^Y)$ pair, sweeping over
a two-dimensional grid of values in the domain $[0,3] \times [0,3]$, with step
size $0.1$ in each dimension.  In each
replicate, we generate $N=20000$ observations from an equiprobably mixture
of multivariate normal distributions with identity covariance, with one
component having mean $(\muX, \muY)$ and
the other component having mean $(-\muX, -\muY)$. We perform a single $2 \times 2$
fold of Gabriel cross-validation and report the number of times
(out of $10$ replicates) where $k=2$ is selected by the algorithm instead of $k=1$.
Figure~\ref{fig:overlap-color_plot} shows the frequency with which $k=2$ is selected by
the algorithm for each $(\mu^X, \mu^Y)$ pair. Darker (red) colors indicate
higher numbers (close to $10$), situations where $k = 2$ is selected more
often than $k = 1$. Ligher (blue) colors indicate that $k = 1$ is preferred.
We can see the simulation result perfectly align with the
theoretical curve (the black line), which separates the $k=2$ zone from the
$k=1$ zone.

\begin{figure}
\centering
\includegraphics[width=3in]{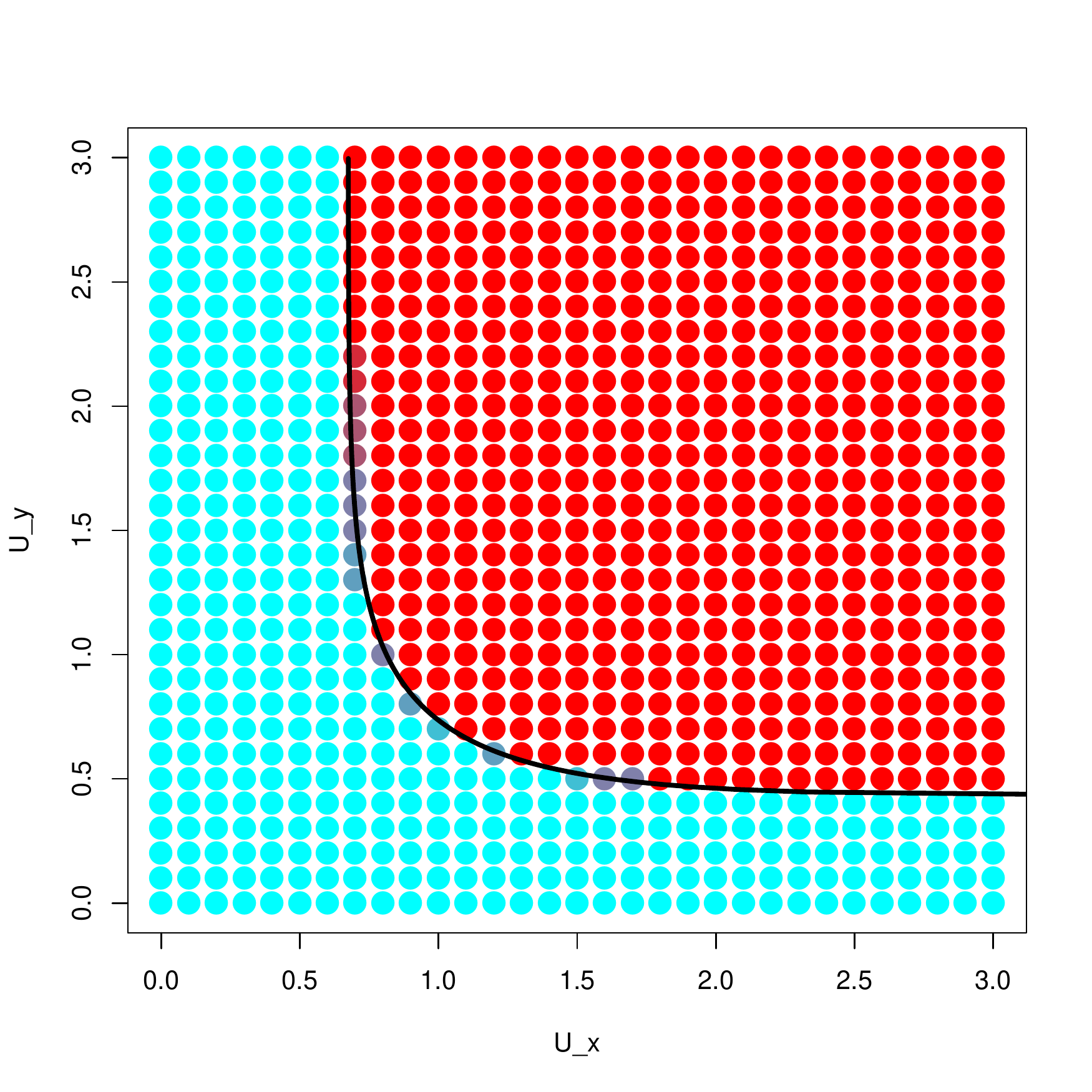}
\caption{Number of times $k=2$ is selected out of $10$ replicates for each
pair of $(\muX, \muY)$. The heat map shows the frequency $k=2$ is selected by
the algorithm, with light (blue) means $k = 1$ is preferred to $k = 2$, and
dark (red) indicates $k = 2$ is preferred to $k = 1$.
The black line is the theoretical boundary determined from
Proposition~\ref{prop:twoclust}.}
\label{fig:overlap-color_plot}
\end{figure}

\section{Adjusting for correlation}
\label{sec:corr-correct}

Proposition~\ref{prop:single-2d} shows that when the correlation between
dimensions is high, the Gabriel cross-validation method tends to overestimate
the number of clusters,~$k$. To mitigate this effect, we propose a two-stage
estimation procedure that attempts to transform the data to minimize the
correlation between features. In the first stage, we get a preliminary
estimate for the number of clusters,~$\hat k_0$, and we use this value to get
an estimate of the noise covariance matrix. Then, in the second stage, we
transform the data attempting to sphere the noise covariance, and re-estimate
the number of clusters, getting a final estimate~$\hat k$.

The details of the correlation correction procedure are as follow:

\begin{enumerate}

  \item Apply the Gabriel cross-validation method on the original data
    $\dataX$ to get a preliminary estimate of the number of
    clusters,~$\hat k_0$.

  \item Apply $k$-means to the full data set with observations
    $\dataX_1, \dataX_2, \dotsc, \dataX_N$ using $k = \hat k_0$ clusters.
    For $i = 1, \dotsc, N$, let $\hat \mu_i$ denote the assigned cluster
    mean for the $i$th observation.

	\item Estimate the noise covariance matrix $\hat{\Sigma}$:
    \[
      \hat{\Sigma}
        =
        \frac{1}{N - \hat k_0}
        \sum^N_{i=1}
        (\dataX_i-\hat{\mu}_i)(\dataX_i-\hat{\mu}_i)^\T.
    \]

  \item \label{step:transform}
    Compute the eigendecomposition $\hat{\Sigma} = \Gamma\Lambda\Gamma^\T$.
    Choose a random (Haar distributed) $P \times P$ orthogonal matrix $Q$.
    Rescale and rotate the original data matrix $\dataX$ to get a
    transformed data matrix defined by
    \[
      \tilde{\dataX} = \dataX\Gamma\Lambda^{-1/2}Q.
    \]

  \item Apply Gabriel cross-validation method to transformed data matrix
    $\tilde{\dataX}$ to get a final estimate for the number of
    clusters,~$\hat k$. 

\end{enumerate}

The noise covariance estimate assumes a shared covariance matrix for all
$k$~clusters. Letting $G_i$ denote the cluster membership of the $i$th
observation, and letting $\mu(g)$ denote the mean of cluster $g$ for $g = 1,
\dotsc, k$, the model supposes that
\[
  \dataX_i = \mu(G_i)+\varepsilon_i,
\]
where~$\varepsilon_i$ has mean zero and covariance matrix $\Sigma$, independent
of~$G_i$. If we knew $\Sigma$, then we could transform the observations as
\begin{align*}
  \Sigma^{-1/2} \dataX_i
    &= \Sigma^{-1/2} \mu(G_i) + \Sigma^{-1/2} \varepsilon_i \\
    &= \tilde \mu(G_i) + \tilde \varepsilon_i,
\end{align*}
where $\tilde \mu = \Sigma^{-1/2} \mu$ and
$\tilde \varepsilon_i = \Sigma^{-1/2} \varepsilon_i$. The transformed data
has the same number of clusters, but has noise covariance
$\cov(\tilde \varepsilon_i) = I$. The matrix product $\Gamma \Lambda^{-1/2}$
used in step~\ref{step:transform} is an estimate of $\Sigma^{-1/2}$.

The transformation used in step~\ref{step:transform} uses a random orthogonal
matrix~$Q$, which gets applied to the rows of $\dataX$ after 
multiplying by the estimate of $\Sigma^{-1/2}$. We use this random orthogonal
matrix for two reasons. First, it ensures that in expectation, each transformed cluster mean
$Q \Lambda^{-1/2} \Gamma^\T \mu(g)$ for $g = 1, \dotsc, k$ is uniformly
spread across all $P$ features. This ensures that the self-consistency
conditions on the cluster centers enumerated in Proposition~\ref{prop:self-consistency}
are likely to hold. The second reason for multiplying by $Q$ is to spread any
remaining correlation in the noise evenly (in expectation) across all
dimensions. The latter effect follows since if $Z$ is a random vector with
covariance matrix $\Theta$, then $Q Z$ has covariance matrix $Q \Theta Q^\T$,
which has expectation $E(Q \Theta Q^\T) = \tr(\Theta) I$.

Our correlation correction is not backed by a rigorous theoretical
justification. However, the simulations and empirical validation in
Sections~\ref{sec:simulations} and~\ref{sec:empirical-validation}
demonstrate the effectiveness of our ad-hoc adjustment procedure.

\section{Performance in simulations}
\label{sec:simulations}

\subsection{Overview}

In this section, we perform a set of simulations to evaluate the performance of our
proposed method and the associated correlation correction described in
Section~\ref{sec:corr-correct}.
We compare our method
with a basket of competing methods including the Gap statistic
\citep{tibshirani2001estimating}, Gaussian mixture model-based clustering
\citep{fraley2002model}, the CH-index \citep{calinski1974dendrite}, Hartigan's
statistic \citep{hartigan1975clustering}, the Jump method
\citep{sugar2003finding}, Prediction strength \citep{tibshirani2005cluster},
and Bootstrap stability \citep{fang2012selection}.
We use the default parameter settings for all competing methods. For Gabriel
cross-validation, we perform $2$-fold cross-validation on the columns~($p=q$)
and $5$-fold cross-validation on the rows~($m=n/4$).
We also compare with Wold cross-validation, which we describe in
Appendix~\ref{sec:wold-cv}.

In all simulation settings, we randomly generate cluster centers by drawing
from a multivariate normal distribution with covariance matrix $\tau I$,
conditional on the cluster centers being well-separated (if the distance
between any two cluster centers is less than $1$, then we re-draw a new set of
cluster centers). We choose $\tau$ to make the probability the cluster
centers being well-separated on the first draw to be equal to approximately
50\%. Many of our simulation settings are chosen to mimic the settings used by
\cite{tibshirani2001estimating}.

For each setting, we perform $100$ replicates. We report the number of times
that each method finds the correct number~$k$ of clusters. We also report 95\%
confidence intervals for the proportions, using Wilson's
method~\citep{wilson1927probable}.  The simulations demonstrate that overall,
the proposed Gabriel cross-validation method and its correlation-corrected
version compare well with the competing methods, and they are robust to
variance heterogeneity, high dimensional data, and heavy-tail data.

\subsection{Setting 1: Correlation between dimensions}

We generate six clusters in $10$ dimensions.  Each cluster has $100$ or $50$
multivariate normal observations with common covariance matrix $\Sigma$ which
has compound symmetric structure with $1$ in diagonal and $\rho$ off diagonal.
$\rho$ takes value in $\{0,0.1,...,0.9\}$.
	
\begin{figure}[H]
\centering
\includegraphics[width=5.5in, height=3.3in]{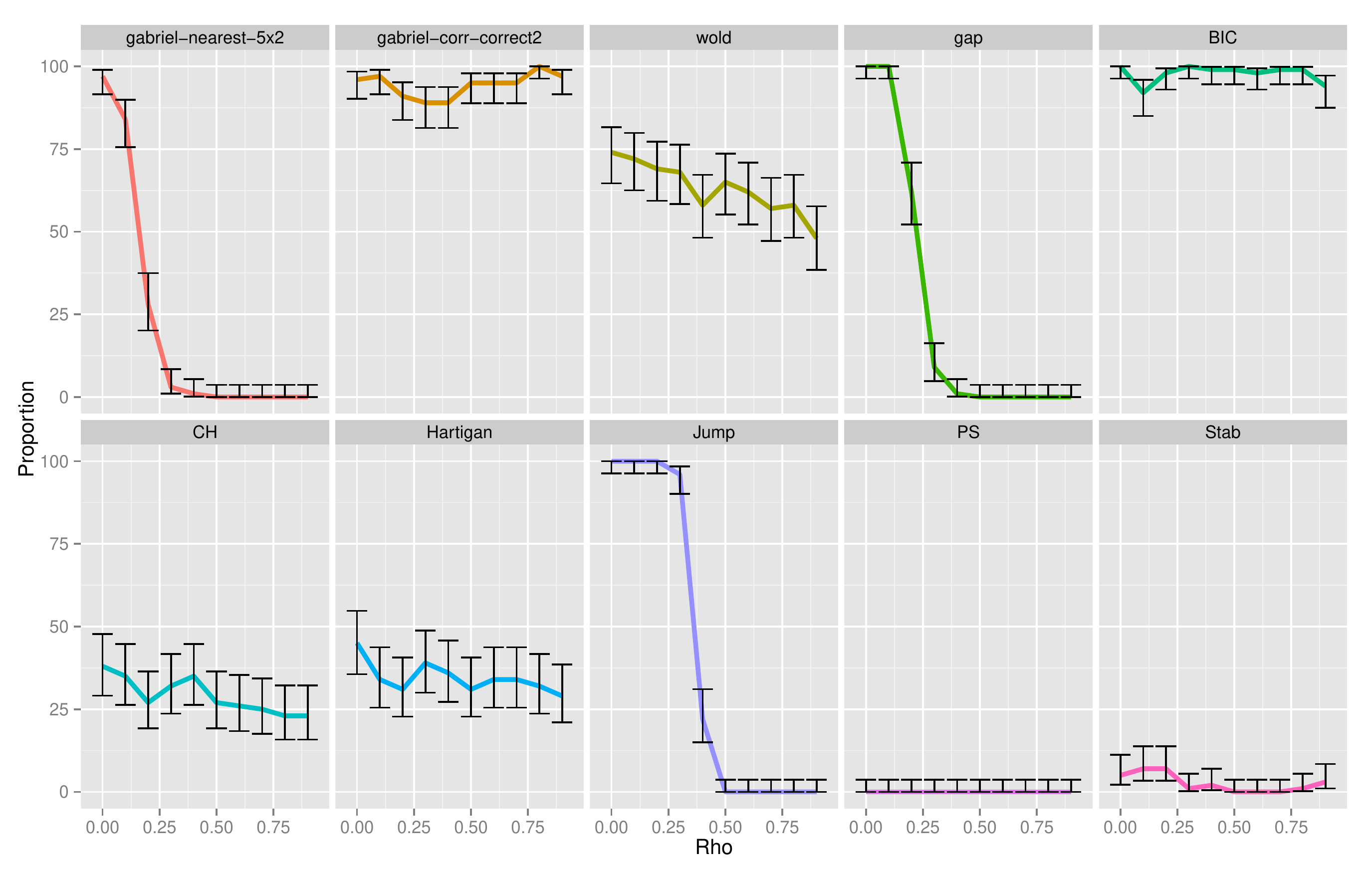}
\label{fig:setting1}
\end{figure}
	
We can see that high correlation between dimensions causes problem for most
existing methods, including Gabriel cross-validation method without the
correlation correction.  The only two methods that work well in the presence
of high correlation are the Gaussian model-based BIC method
\citep{fraley2002model} and the correlation-corrected Gabriel method.

\subsection{Setting 2: Noise dimensions}

We generate three clusters in $6$ dimensions. Each cluster has $1000$ or $500$
multivariate normal observations with identity covariance matrix.
We add $r$ dimensions of noise to the data, randomly generated from a uniform
distribution on $[0,1]$. The noise dimension $r$ takes values in
$\{0,6,...,54\}$. 
	
\begin{figure}[H]
\centering
\includegraphics[width=5.5in, height=3.3in]{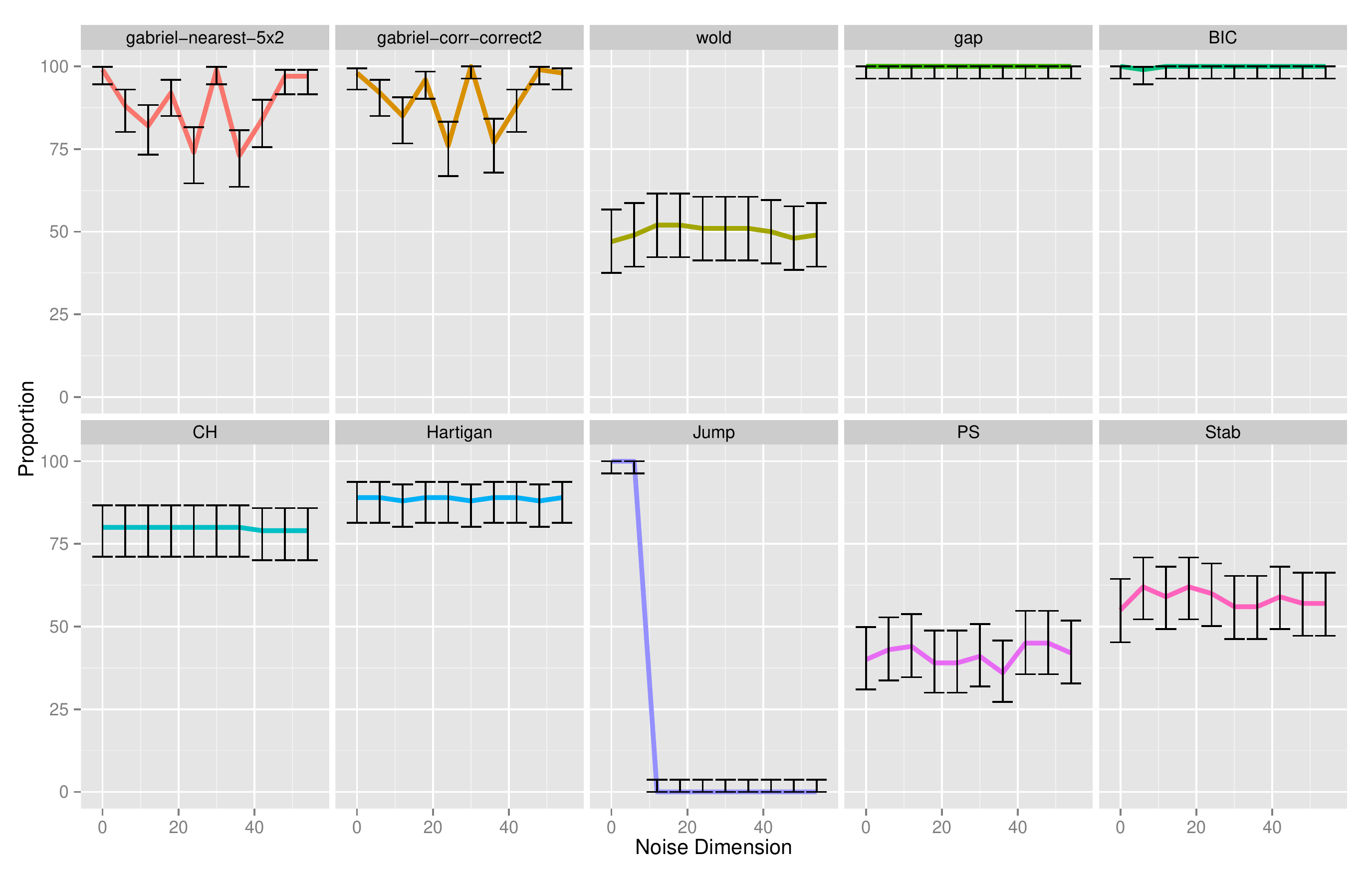}
\label{fig:setting2}
\end{figure}

Most methods are relatively insensitive to adding more noise dimensions; the
one exception to this is the Jump method, which deteriorates significantly in
the presence of extra noise dimensions.

\subsection{Setting 3: High dimension}

We generate eight clusters in $P$ dimensions, with $P$ taking values in
$\{10,20,...,100\}$.  Each cluster has $100$ or $50$ multivariate
normal observations with identity covariance matrix. 
	
\begin{figure}[H]
\centering
\includegraphics[width=5.5in, height=3.3in]{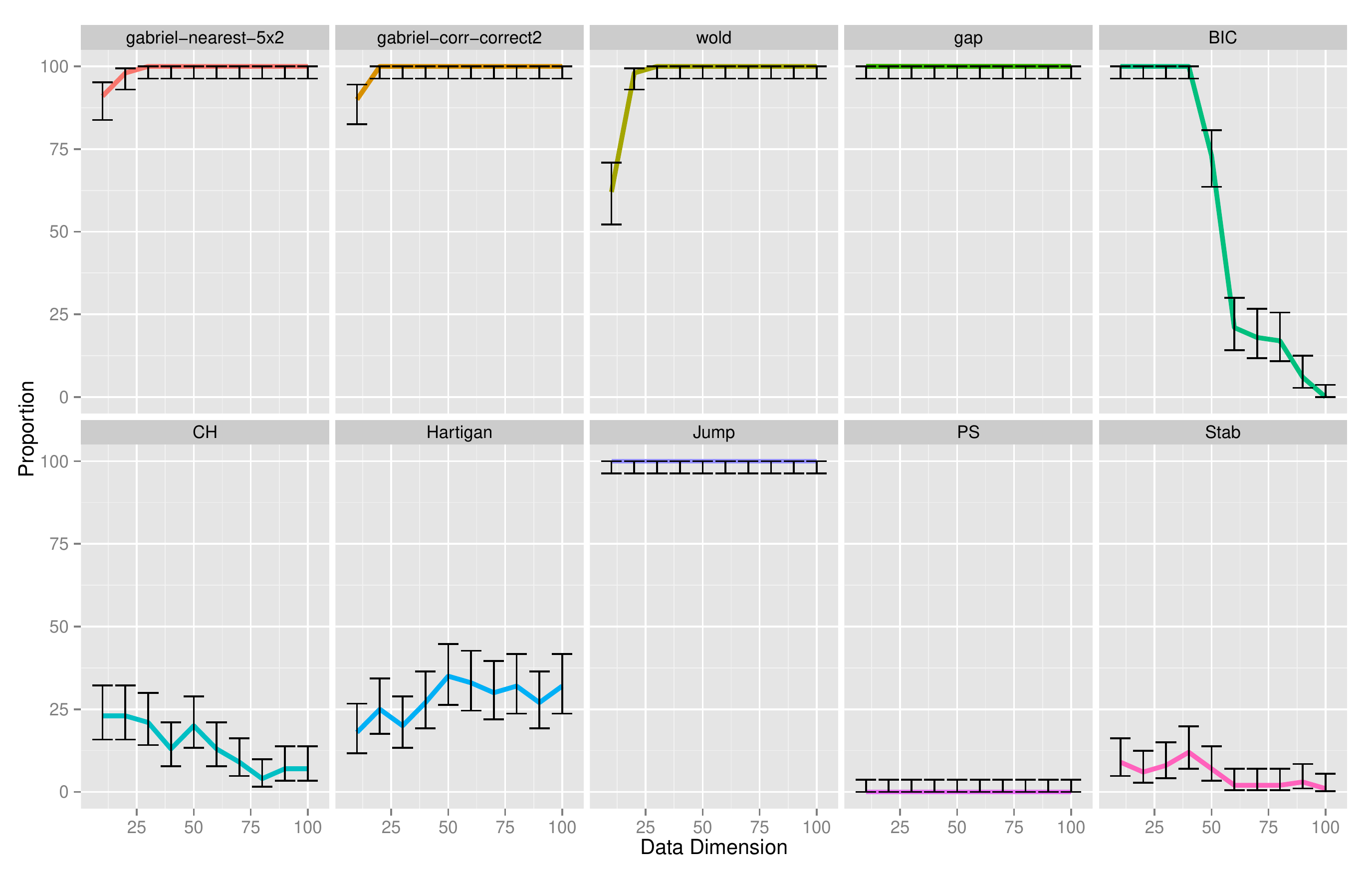}
\label{fig:setting3}
\end{figure}

Some methods, like Jump and Gap, are insensitive to higher dimensions while
other methods deteriorate quickly with increasing dimension, most notably the
Gaussian model-based BIC method.  Gabriel cross-validation and its
correlation-corrected version tend to work better in higher dimensions.

\subsection{Setting 4: Variance heterogeneity}

We generate three clusters in $20$ dimensions. Each cluster has $60$
observations.  Observations are generated from
$\mathcal{N}\left(\mathbf{0},\sigma_1^2\mathbf{I}\right)$,
$\mathcal{N}\left(\mathbf{0},\sigma_2^2\mathbf{I}\right)$ and
$\mathcal{N}\left(\mathbf{0},\sigma_3^2\mathbf{I}\right)$ where $\sigma_1^2 :
\sigma_2^2: \sigma_3^2 = 1:\frac{1+R}{2}:R$. The maximum ratio $R$ takes
values in $\{1,5,10,...,45\}$
	
\begin{figure}[H]
\centering
\includegraphics[width=5.5in, height=3.3in]{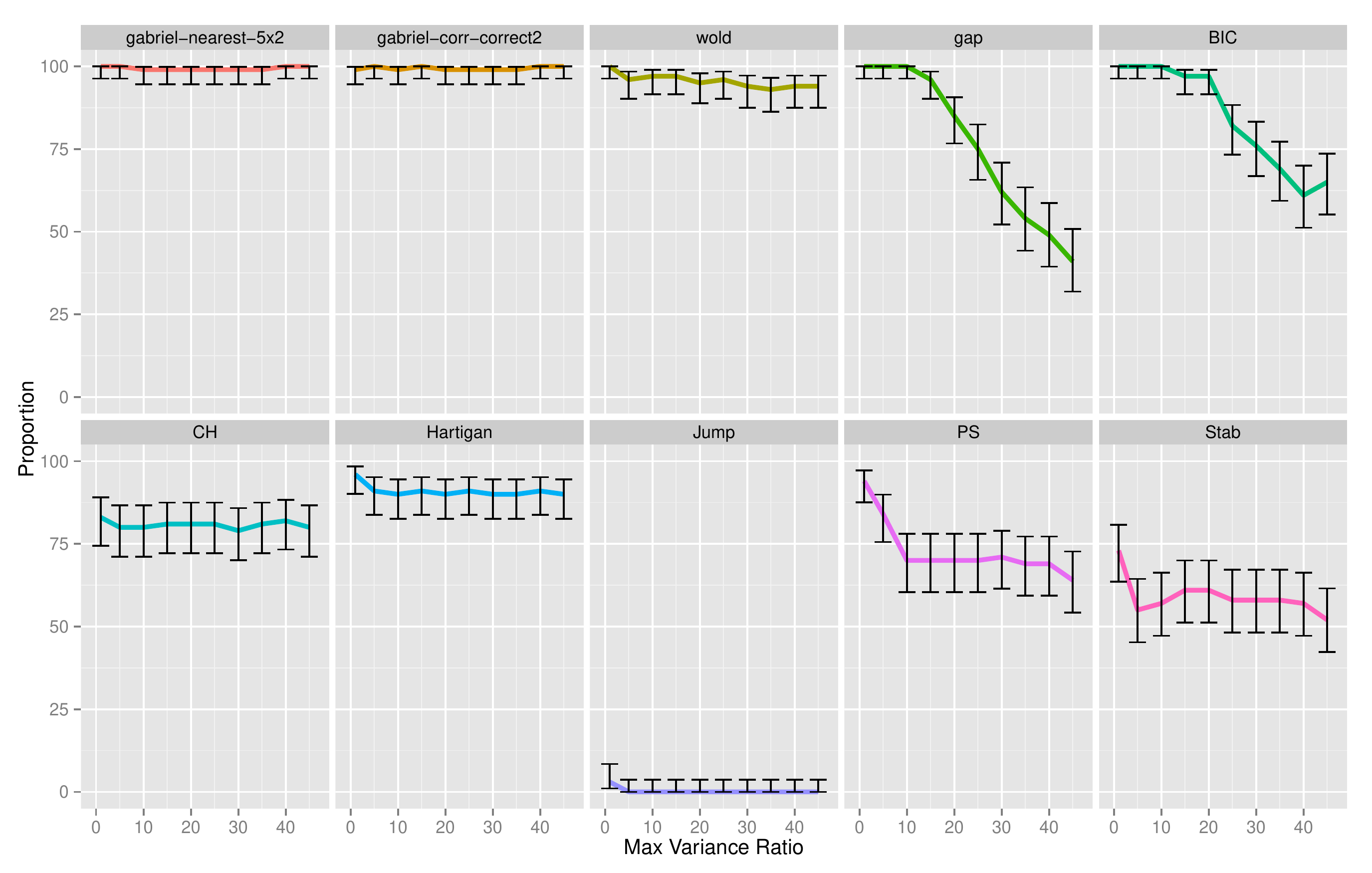}
\label{fig:setting4}
\end{figure}

This setting demonstrates that most existing methods are sensitive to variance
heterogeneity, most notably the Gap method and the model-based BIC method. The
proposed Gabriel cross-validation method and its correlation-corrected version
consistently perform well in estimating $k$ and they are insensitive to
variance heterogeneity.

\subsection{Setting 5: Heavy tail data}

We generate five clusters in $15$ dimensions. Each cluster has $80$
observations.  Observations have independent $t$ distributions in each
dimension, with degrees of freedom $\nu$ taking values in $\{11,10,...,2\}$
     
\begin{figure}[H]
\centering
\includegraphics[width=5.5in, height=3.3in]{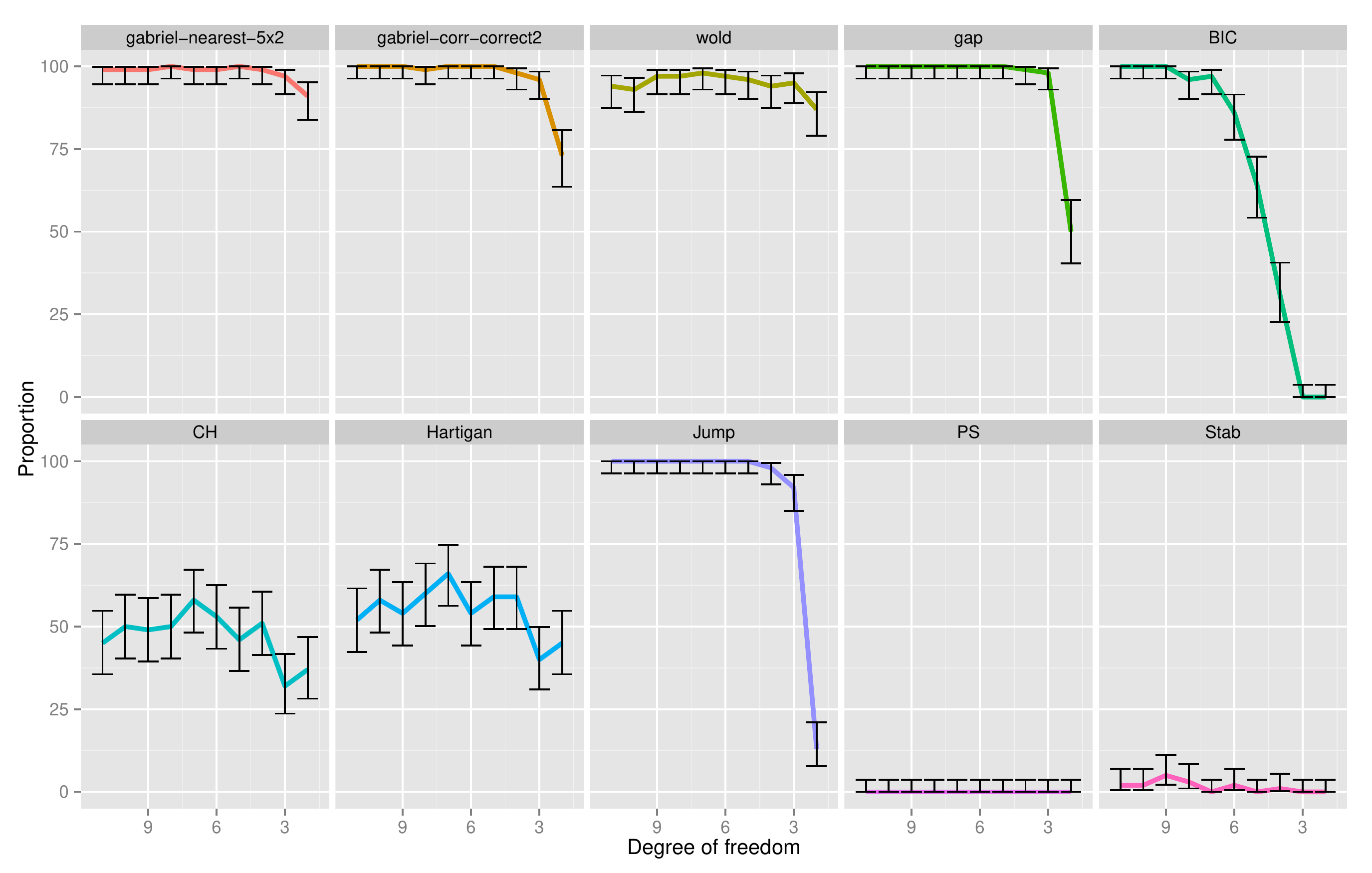}
\label{fig:setting5}
\end{figure}

This setting investigates performance in the presence of heavy-tailed data.
When the degrees of freedom decreases, the tail becomes more flat and the
Gaussian assumption becomes more inappropriate.  For most methods, their
performances are relatively stable until the tail gets very heavy. In the case
where there are $2$ degrees of freedom, the Gap and Jump methods' performances
deteriorate considerably relative to the Gabriel method.

\section{Empirical validation}
\label{sec:empirical-validation}

To further validate our method, we applied it to three real world data sets
with known clustering structure.

The first data set is congressional voting data consisting of voting records
of the second session of the $98$th United States Congress,
\citep{schlimmer1987concept}.  This data set includes votes for legislators on
the $P = 16$ key votes.  For each vote, each legislator either votes
positively (``yea'') or negatively (``nay''). We removed legislators with missing
votes.  This results in $N = 232$ remaining records, with $124$ democrat and
$108$ republican. There are $k = 2$ clusters of legislators, corresponding to
political party.

The second benchmark is the \citet{mangasarian1990pattern} Wisconsin breast
cancer data set.  After excluding the records with missing data, this data set
consists records of $N = 683$ patients, each with measurements of $P = 9$
attributes of their biopsy specimens. It is known that there are $k = 2$
groups of patients: $444$ patients with benign specimens and $239$ patients
with malignant specimens. There is some disagreement as to what the ``true''
value of $k$ should be for this data set; \citet{fujita2014non} have argued
that the malign group is heterogeneous, and should be split into two smaller
clusters, yielding $k = 3$.

The third data set is gene expression data of $k = 5$ types of brain tumors
from \citet{pomeroy2002prediction}, which contains $N = 42$ observations
including $10$ medulloblastomas, $10$ malignant gliomas, $10$ atypical
teratoid/rhabdoid tumors, $8$ primitive neuroectodermal tumours and $4$ normal
cerebella. After preprocessing and feature selection, there are $P = 1379$
variables, corresponding to log activation levels for 1379 genes.


\begin{table}
\centering
\captionsetup{justification=centering}
\caption{\label{tab:benchmark} Number of clusters selected on benchmark datasets}
\begin{tabular}{lccc}
\toprule                  
         & \multicolumn{3}{c}{Dataset} \\
\cmidrule(l){2-4}
Method   & Congress Voting & Breast Cancer & Brain Tumours \\
\midrule                    
Gabriel  & $2$             & $3$           & $5$  \\    
Gabriel (corr.~correct) & $2$ & $2$           & $5$  \\  
Wold     & $2$             & $3$           & $4$  \\
Gap      & $8$             & $10$          & $10$ \\   
BIC      & $2$         & $5$           & $2$  \\   
CH       & $2$             & $2$           & $2$  \\
Hartigan & $3$             & $3$           & $4$  \\
Jump     & $10$            & $9$           & $1$  \\   
PS       & $2$  & $2$           & $1$  \\
Stab.    & $2$  & $2$           & $7$  \\ 
\midrule 
Ground Truth & $2$         & $2$ or $3$    & $5$  \\
\bottomrule
\end{tabular}
\end{table}

We applied the Gabriel cross-validation method, the correlation-corrected
version, and the competing methods described in Section~\ref{sec:simulations}
to each of the three benchmark datasets.  In each dataset, we allowed the
number of clusters, $k$, to range from $1$ to $10$.  Table~\ref{tab:benchmark}
displays the results.  Both versions of the Gabriel method perform well on all
three benchmark datasets. In fact, Gabriel cross-validation is the only method
that correctly identifies the number of clusters in all three benchmark
datasets.

\section{Application to yeast cell cycle data}
\label{sec:application-yeast}

\subsection{Motivation}

Now that we have established that Gabriel cross-validation can effectively
estimate the number of clusters, we apply our method to a yeast cell cycle
dataset. This dataset was collected by \cite{cho1998genome} to study the cell
cycle of budding yeast Saccharomyces cerevisiae. Other authors, including
\citet{tavazoie1999systematic} and \citet{dortet2008model} have used $k$-means
and related methods to cluster the genes in the dataset, with $k$
approximately equal to~30. In both of these analyses, the authors discard the
majority of their clusters as uninterpretable or noise, focusing instead on a
small number of clusters. In contrast to these previous analyses, Gabriel
cross-validation finds a small number, $k = 5$ clusters, all of which are
interpretable.

\subsection{Data collection and preprocessing}

To obtain the raw data, \citet{cho1998genome} first synchronized a collection
of CDC28 yeast cells by raising their temperature to $37^\circ$C in the late
G1 cell cycle phase, then they reinitiated the cell cycle by switching them to
a cooler environment ($25^\circ$C). The authors collected data at $17$ time
points spaced evenly at $10$-minute intervals, covering almost $2$ complete
cell cycles. At each of the $17$ time points, they used
oligonucleotide microarrays to measure $6220$ gene expression profiles.

\citet{tavazoie1999systematic} preprocessed the raw data in an attempt to
normalize the gene responses and remove noise. They reduced the original
$6220$ gene expression profiles to just the $2945$ genes with the highest
variances. Then, they removed the time points at $90$ and $100$ minutes,
because they deemed the measurements at these time points to be unreliable.
Finally, they centered and scaled the genes by subtracting the means and
dividing by the standard deviations, as computed from the remaining $15$ time
points.  After the preprocessing, the data matrix $\dataX$ has $N = 2945$
genes and $P = 15$ time points.

We obtained the preprocessed data and the \citet{tavazoie1999systematic}
cluster analysis from \url{http://arep.med.harvard.edu/network_discovery/}.

\subsection{Clustering}

\begin{figure}
		\centering
	\includegraphics[width=5.7in, height=5.7in]{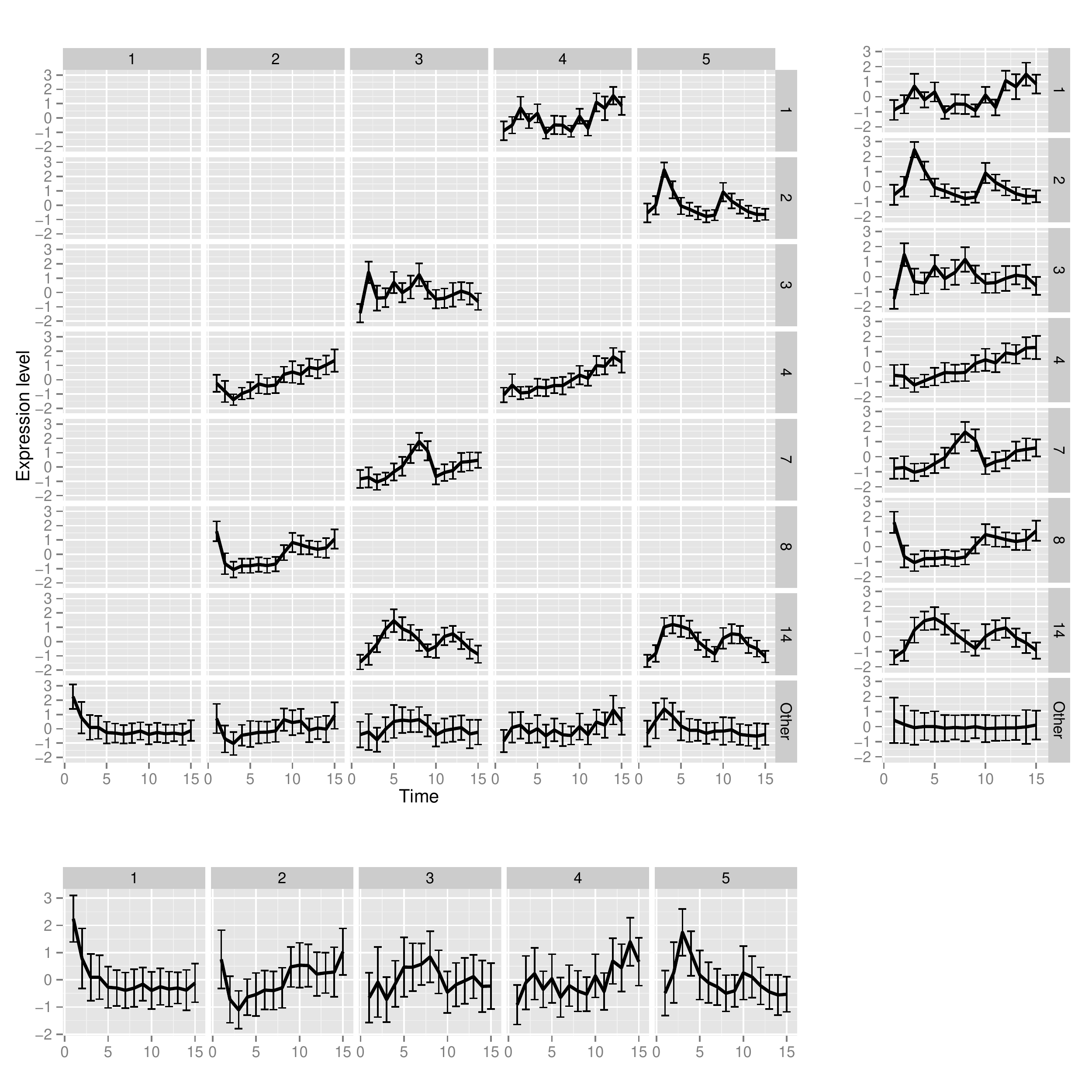}
	\caption{Yeast data set mean expression profiles. The $5$ clusters found by
  the Gabriel method are on the bottom; Clusters profiled in
  \cite{tavazoie1999systematic} are on the right.}
	\label{fig:5_clusters}
\end{figure}	

Following \citet{tavazoie1999systematic} and \citet{dortet2008model}, we
treat the $N = 2945$ gene expression profiles as draws from a mixture
distribution, and we perform $k$-means clustering to segment the genes
according to their expression profiles across the $P = 15$ time points. 
Both the original and the correlation-corrected version of Gabriel
cross-validation find $k = 5$ clusters.

The lower-left panel of Figure~\ref{fig:5_clusters} shows the average
expression level for each cluster across the 15 time points, with error bars
showing standard deviations. Cluster~$1$ has decreasing expression level with
time. The mean gene expression level in Cluster~$2$ decreases at the beginning
and then increases. Cluster~$3$ is a periodic cluster where one can see two
periods corresponding to the two cell cycles. Cluster~$4$ has increasing
expression level with time.  Cluster~$5$ is another periodic cluster.

\subsection{Enrichment analysis}

To further validate our clusters, we follow \citet{tavazoie1999systematic},
performing an enrichment analysis to discover which functional gene groups are
significantly over-represented in each cluster. In the Saccharomyces Genome
Database, each gene is mapped to a set of Gene Ontology categories. We focus
on the $103$ biological process categories.

\begin{table}
\begin{center}
\captionsetup{justification=centering}
\caption{\label{table:enrich} Biological process enrichment within gene clusters}
\footnotesize
\begin{tabular}{ccll}
\toprule
Cluster & Cluster Size & Process Category (In Cluster/Total Genes) & $p$-value \\
\midrule
$1$ & $550$ & response to oxidative stress ($24/55$)  & $1.5\times10^{-5}$  \\
    &       & response to chemical ($64/213$)         & $2.2\times10^{-5}$  \\
$2$ & $590$ & mitochondrion organization ($79/159$)   & $1.1\times10^{-16}$ \\
    &       & mitochondrial translation ($28/51$)     & $2.9\times10^{-8}$  \\
    &       & generation of precursor metabolites and energy ($37/80$) & $7.3\times10^{-8}$ \\
$3$ & $654$ & transcription from RNA polymerase II promoter ($75/214$) & $5.5\times10^{-6}$ \\
    &       & mRNA processing ($30/67$)               & $2.7\times10^{-5}$  \\
    &       & mitotic cell cycle ($63/183$)           & $6.2\times10^{-5}$  \\
$4$ & $634$ & cytoplasmic translation ($105/134$)     & $3.3\times10^{-47}$ \\
    &       & ribosomal subunit biogenesis ($73/138$) & $7.7\times10^{-17}$ \\
    &       & rRNA processing ($61/131$)              & $5.7\times10^{-11}$ \\
    &       & ribosome assembly ($21/36$)             & $1.5\times10^{-6}$  \\
$5$ & $517$ & chromosome segregation ($53/106$)       & $6.0\times10^{-15}$ \\
    &       & cellular response to DNA damage stimulus ($71/172$)      & $3.6\times10^{-14}$ \\
    &       & DNA repair ($64/147$)                   & $3.7\times10^{-14}$ \\
    &       & DNA replication ($42/78$)               & $1.8\times10^{-13}$ \\
    &       & mitotic cell cycle ($70/183$)           & $4.8\times10^{-12}$ \\
\bottomrule
\end{tabular}
\end{center}
\hspace{0.5in} \footnotesize {}
\end{table}

For each category and each cluster, we compute a $p$-value for the null
hypothesis that genes from the category are distributed across all clusters
without any bias towards the particular cluster in question. Under the null
hypothesis, the number of genes from the category that end up in the cluster
is distributed as a hypergeometric random variable. For each cluster, we
compute $p$-values for all $103$ biological process categories, and we report
those that are significantly enrigched in Table~\ref{table:enrich}.  Using a
Bonferroni correction to control the family-wise error rate at level 5\%, we
only report $p$-values that are less than $0.05 / 103 = 4.8\times10^{-4}$.

From Table \ref{table:enrich}, we can see that Cluster~1 is enriched with
genes that somatize cell stress, such as oxidative heat-induce proteins.
Cluster~2 contains genes that govern mitochondrial translation and
mitochondrion organization. Cluster~3, the first period cluster, contains cell
cycle genes related to budding and cell polarity, along with genes that govern
RNA processing and transcription. Cluster~4 contains genes related to
cytoplasmic translation and genes encoding ribosomes. Cluster~5, the second
periodic cluster, contains genes that participate cell-cycle processes, along
with DNA replication and DNA repair.

\subsection{Comparison with Tavazoie~clusters}

In the \citet{tavazoie1999systematic} analysis, those authors performed
$k$-means clustering with $k = 30$; they found 23 of the clusters to be
uninterpretable, and they found 7 clusters to be meaningful. To compare our
clusters with the Tavazoie et al.\ clusters, we prepared a confusion matrix
comparing our clusters with the 7 interpretable Tavazoie clusters in
Table~\ref{table:confusion}.
Entry $(i,j)$ of the confusion matrix gives the number of genes in
Tavazoie's Cluster~$i$ and our Cluster~$j$.

\begin{table}
\caption{\label{table:confusion} Confusion matrix comparing
    the 5 clusters found by Gabriel cross-validation to the 7 interpretable
    clusters found by the \cite{tavazoie1999systematic} analysis}
\centering
\begin{tabular}{lcccccr}
  \toprule
 & Cluster 1 & Cluster 2 & Cluster 3 & Cluster 4 & Cluster 5 & Total \\ 
  \midrule
Cluster 1 & 0 & 0 & 1 & 161 & 2 & 164 \\ 
  Cluster 2 & 1 & 0 & 0 & 0 & 185 & 186 \\ 
  Cluster 3 & 0 & 0 & 91 & 11 & 2 & 104 \\ 
  Cluster 4 & 0 & 102 & 2 & 66 & 0 & 170 \\ 
  Cluster 7 & 1 & 10 & 83 & 7 & 0 & 101 \\ 
  Cluster 8 & 3 & 145 & 0 & 0 & 0 & 148 \\ 
  Cluster 14 & 0 & 1 & 29 & 6 & 38 & 74 \\ 
  Other & 545 & 332 & 448 & 383 & 290 & 1998 \\ 
  \midrule
  Total & 550 & 590 & 654 & 634 & 517 & 2945 \\ 
   \bottomrule
\end{tabular}
\end{table}

Figure~\ref{fig:5_clusters} provides a more in-depth comparison with the
Tavazoie clusters, using a graphical confusion matrix.  The plot in cell
$(i,j)$ of the upper left part of this figure gives the mean expression level
for genes in the intersection of Tavazoie's Cluster~$i$ and our Cluster~$j$;
the plots in the margins give the mean expression levels for Tavazoie's
clusters (top right) and our clusters (bottom left). In
Figure~\ref{fig:5_clusters}, we only include a plot for cell $(i,j)$ if the
number of genes in that cell is greater than $20$.

Our Cluster~1 mainly consists of genes that Tavazoie et al.\ found to be in
uninterpretable clusters. Our Cluster~2 contains high concentrations of
Tavazoie's Clusters~4 and~8. Our first periodic cluster, Cluster~3, contains
high concentrations of Tavazoie's Clulsters~3,~7, and~14; this is notable,
because Tavazoie et al.\ highlighted their Clusters~7 and~14 as being
periodic. Our Cluster~4 contains almost all of Tavazoie's Cluster~1, along
with part of Tavazoie's Cluster~4. Finally, our second periodic cluster,
Cluster~5, contains almost all of Tavazoie's Cluster~2, along with part of
Tavazoie's Cluster 14; this, again, is notable, because Tavazoie et al.\
highlited these clusters as being periodic.

For the clusters that Tavazoie et al.\ were able to characterize, our analysis
broadly agrees with the earlier clustering. The major difference between our
analysis and that of \citet{tavazoie1999systematic} is that we are able to
identify meaningful groups of genes with a much smaller value of $k$ ($k = 5$
instead of $k = 30$), and we are able to interpret all of the clusters
found by our analysis.

\section{Discussion}
\label{sec:discussion}

In this paper, we proposed a new approach to estimate the number of clusters
to be used in $k$-means clustering. The intuition behind our proposed method
is to transform the unsupervised learning problem into a supervised learning
problem via a form of Gabriel cross validation.  We proved that our method is
self-consistent, and we analyzed its behavior in some special cases of
Gaussian mixture models.  Using simulations and real data examples, we
demonstrated that our method has good performance, competitive with existing
approaches. The simulations and empirical benchmarks demonstrate the
advantages of our method. In the yeast cell cycle application, our method
was able to identify meaningful gene groups with a small number of clusters.

There are many other clustering algorithms that get used in practice besides
$k$-means. We suspect that it should be possible to apply our method in the
context of a spectral clustering procedure, after transforming by the
eigenvectors of the Laplacian matrix. For other clustering schemes, including
versions of hierarchical clustering, we are less certain about the viability
of Gabriel cross-validation.  It is an open question as to whether Gabriel
cross-validation can be extended to other clustering methods, and whether such
extensions will perform well in practice.

For $k$-means clustering, Gabriel cross-validation is competitive with
other model selection methods, especially in the presence of high-dimensional,
heterogeneous, or heavy-tailed data.

\section*{Acknowledgements}

We thank Rob Tibshirani for getting us started on this problem and
for providing code for some initial simulations. We thank Art~Owen for
providing us with a summary of the relevant theory on $k$-means clustering,
and for giving us feedback on our theoretical results.  We also thank Cliff
Hurvich, Josh Reed, and Jeff~Simonoff, for providing comments on an early
draft of this manuscript and for suggesting further avenues of inquiry.

\appendix

\section{Clustering scree plot examples}

\label{sec:elbow-fail}

The top row of Figure~\ref{fig:elbow}
displays an example where the elbow in $W_k$ corresponds to the true number $k
= 4$ of mixture components in the data-generating mechanism. The elbow
approach is simple and often performs well, but it requires subjective
judgment as to where the elbow is located, and, as the bottom row of
Figure~\ref{fig:elbow} demonstrates, the approach can easily fail.

\begin{figure}
\centering
\begin{minipage}{\linewidth}
  \begin{minipage}{0.45\linewidth}
    \includegraphics[width=\linewidth]{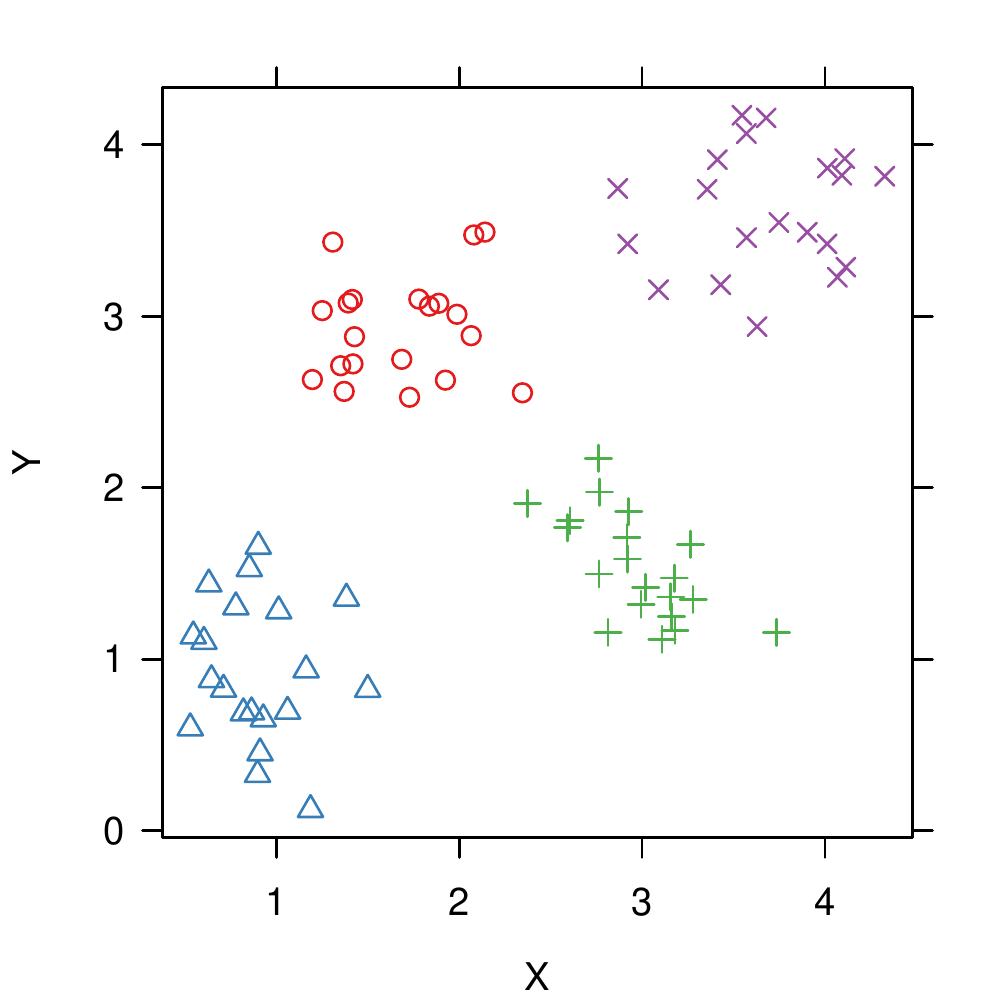}
  \end{minipage}
  \hspace{0.05in}
  \begin{minipage}{0.45\linewidth}
    \includegraphics[width=\linewidth]{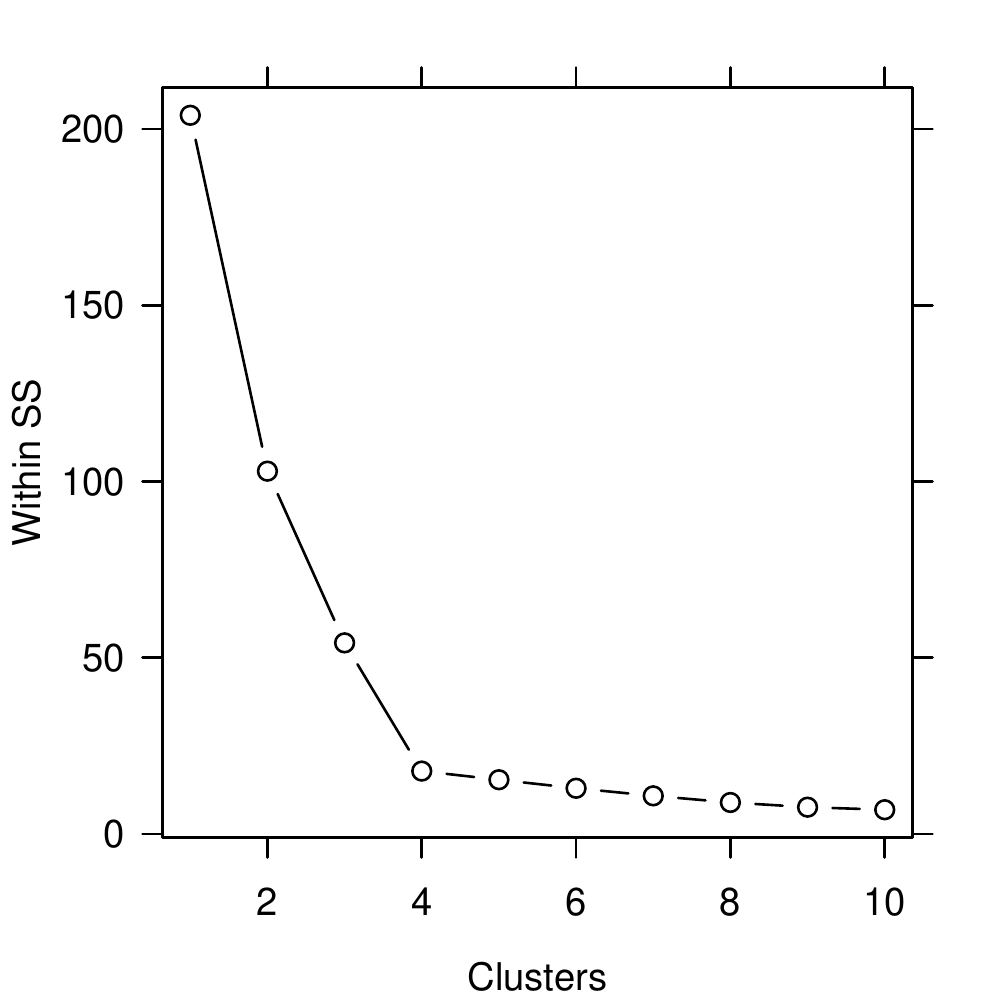}
  \end{minipage}
\end{minipage}
\begin{minipage}{\linewidth}
  \begin{minipage}{0.45\linewidth}
    \includegraphics[width=\linewidth]{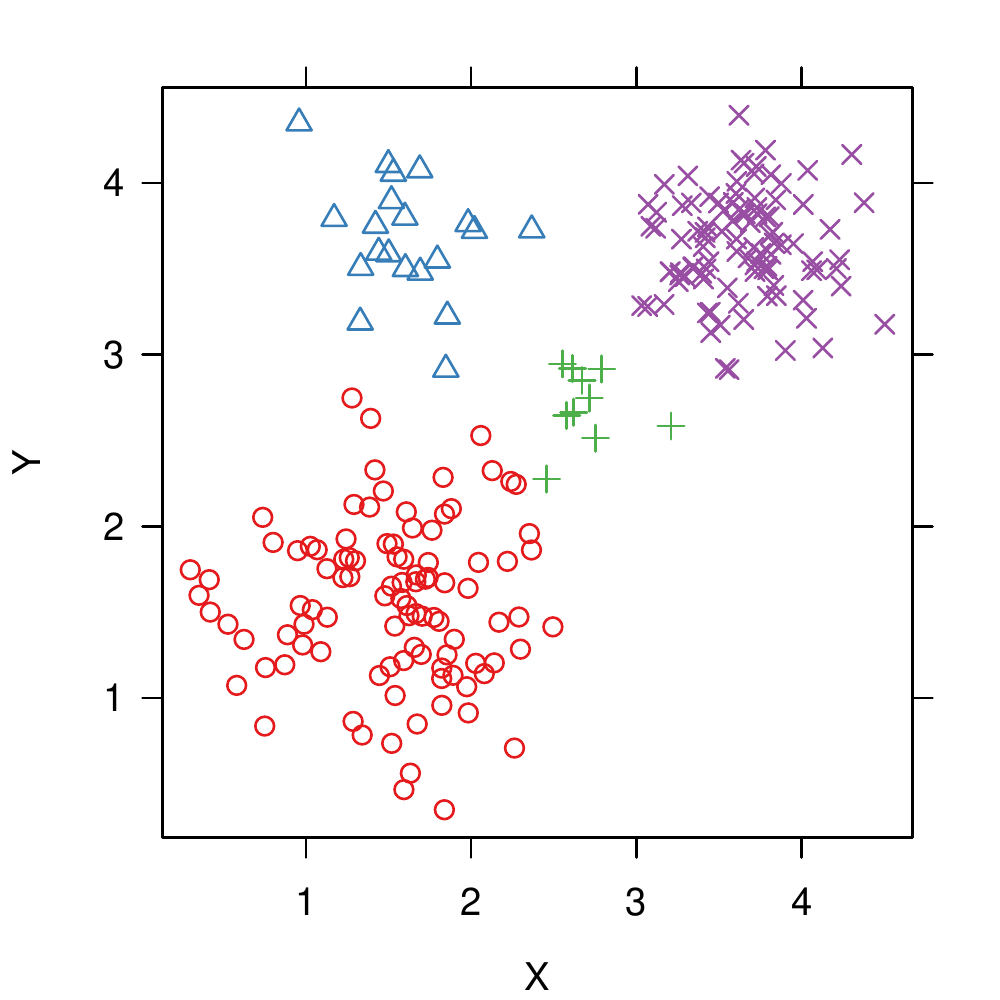}
  \end{minipage}
  \hspace{0.05in}
  \begin{minipage}{0.45\linewidth}
    \includegraphics[width=\linewidth]{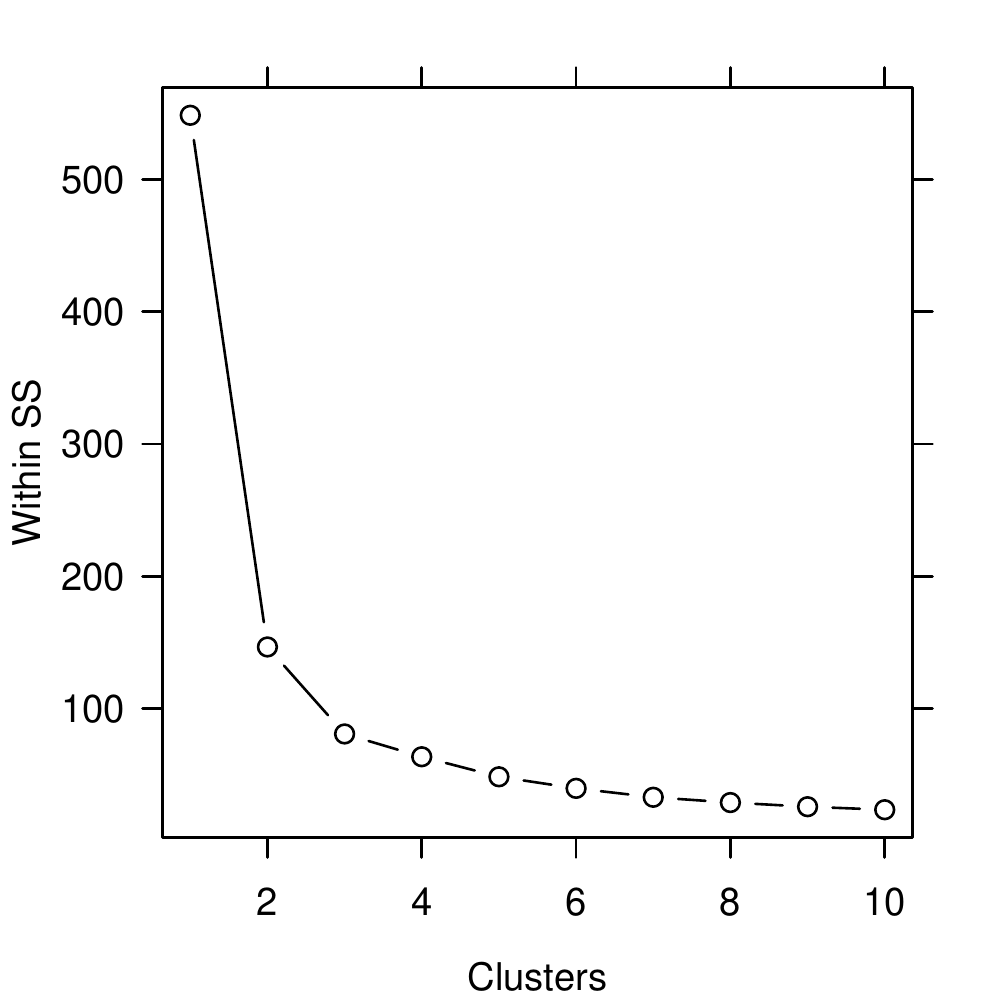}
  \end{minipage}
\end{minipage}
\caption{Left panels show the $(X,Y)$ data points; right panels
  show the corresponding values of the within-cluster sum of squares $W_k$
plotted against the number of clusters, $k$.}
\label{fig:elbow}
\end{figure}

\section{Analysis of Gabriel method: Single cluster in more than two dimensions}
\label{sec:single-general-dim}

\begin{proposition}
Suppose that $\{ (X_i, Y_i) \}_{i=1}^{n + m}$ is data from a single fold
of Gabriel cross-validation, where each $(X,Y)$ pair in $\R^{p+q}$ is an
independent draw from a mean-zero multivariate normal distribution with 
covariance matrix $\Sigma_{XY}= \left( \begin{smallmatrix} \Sigma_{XX} & \Sigma_{XY} \\ 
 \Sigma_{YX} & \Sigma_{YY} \end{smallmatrix}\right)$, with $\Sigma_{YY}$ has leading 
eigenvalue $\lambda_1$ and corresponding eigenvector $u_1$. In this case, the data are drawn
from a single cluster; the true number of clusters is~$1$.  If  $\frac{\sqrt{\lambda_1}}{2}
 > \frac{u^T_1\Sigma_{YX}\Sigma_{XY}u_1}{\sqrt{u^T_1\Sigma_{YX} \Sigma_{XX} \Sigma_{XY} u_1}}$,
then $\CV(1) < \CV(2)$ with probability tending to one as $m$ and $n$ increase.
\end{proposition}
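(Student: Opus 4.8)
The plan is to mirror the proof of Proposition~\ref{prop:single-2d}, replacing the scalar computations with their vector analogues and using the leading principal direction of $\Sigma_{YY}$ in place of the $Y$-axis. Throughout I would keep only leading-order terms, absorbing the sampling errors into $\OhP(n^{-1/2})$ and $\OhP(m^{-1/2})$ via Pollard's strong consistency theorem for $k$-means and a law of large numbers on the test set, exactly as in the two-dimensional argument.

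First I would identify the population $2$-means solution for the $Y$-coordinates. Because $Y \sim \mathcal{N}(0,\Sigma_{YY})$ is symmetric about the origin, the two population centers must be antipodal, say $\pm a$, with the induced partition a half-space split $\{w^\T y > 0\}$ for some direction $w$. Writing the between-cluster dispersion of a balanced half-space split along unit direction $w$ as a generalized Rayleigh quotient in $w$, I would show it is maximized by $w \propto u_1$; the Gaussian conditional-mean formula then pins the centers at $\pm a$ with $a = \sqrt{2\lambda_1/\pi}\, u_1$. Set $\alpha = \sqrt{2\lambda_1/\pi}$. This step---establishing that the optimal split direction is the top eigenvector and nailing down the centers---is the main obstacle, since it is where the higher-dimensional geometry genuinely departs from the scalar case; everything downstream is a covariance computation.

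Next I would carry out the classify and predict steps. Using $\E(X \mid Y) = \Sigma_{XY}\Sigma_{YY}^{-1} Y$ together with $\Sigma_{YY}^{-1} u_1 = \lambda_1^{-1} u_1$, the $X$-means of the two $Y$-clusters converge to $\pm \sqrt{2/(\pi\lambda_1)}\,\Sigma_{XY} u_1$. The nearest-mean classifier therefore has an origin-centered linear boundary, assigning $X$ to cluster $+$ precisely when $w^\T X > 0$ for $w = \Sigma_{XY} u_1$. Substituting the predicted responses $\hat Y = \pm \alpha u_1$ and using $\CV(1) \to \E\|Y\|^2$, the limiting difference reduces to
\[
  \CV(2) - \CV(1) \;\to\; \alpha^2 - 2\alpha\,\E\!\left[\operatorname{sign}(w^\T X)\,(u_1^\T Y)\right],
\]
so that $\CV(1) < \CV(2)$ asymptotically if and only if $\alpha > 2\,\E[\operatorname{sign}(w^\T X)(u_1^\T Y)]$.

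Finally I would evaluate the remaining Gaussian expectation. With $U = w^\T X$ and $V = u_1^\T Y$ jointly mean-zero normal, the identity $\E[\operatorname{sign}(U)\,V] = \sqrt{2/\pi}\,\cov(U,V)/\sqrt{\var U}$ gives $\cov(U,V) = u_1^\T \Sigma_{YX}\Sigma_{XY} u_1$ and $\var U = u_1^\T \Sigma_{YX}\Sigma_{XX}\Sigma_{XY} u_1$. Substituting these and cancelling the common factor $\sqrt{2/\pi}$ turns $\alpha > 2\,\E[\operatorname{sign}(U)V]$ into exactly
\[
  \frac{\sqrt{\lambda_1}}{2} > \frac{u_1^\T \Sigma_{YX}\Sigma_{XY} u_1}{\sqrt{u_1^\T \Sigma_{YX}\Sigma_{XX}\Sigma_{XY} u_1}},
\]
which is the stated hypothesis; collecting the $\OhP$ terms then yields the probability-tending-to-one conclusion. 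As a sanity check, specializing to $p=q=1$ and $\lambda_1 = 1$ recovers the condition $|\rho| < 1/2$ of Proposition~\ref{prop:single-2d}.
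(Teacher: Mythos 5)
Your proposal is correct and follows essentially the same route as the paper's proof: identify the population $2$-means centers for $Y$ as $\pm\sqrt{2\lambda_1/\pi}\,u_1$, show the learned classifier has the linear boundary $u_1^\T\Sigma_{YX}X>0$, and reduce the comparison of $\CV(2)$ and $\CV(1)$ to the stated inequality via Gaussian conditional-mean calculations. The only differences are cosmetic --- you sketch a Rayleigh-quotient justification for the principal-direction claim that the paper merely asserts, and you finish with the identity $\E[\operatorname{sign}(U)V]=\sqrt{2/\pi}\,\cov(U,V)/\sqrt{\var U}$ where the paper instead compares $\|\hat{\mu}^Y_1-\bar{\mu}^Y_1\|^2$ with $\|\hat{\mu}^Y_1\|^2$, which is the same computation in a different arrangement.
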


\begin{proof}
Let $X$ and $Y$ be jointly multivariate normal distributed with mean $\mathbf{0}$ and covariance matrix $\Sigma_{XY}$, i.e.
\[	(X,Y) \sim \mathcal{N} \left( \mathbf{0}, \Sigma_{XY}\right)	\]
where $\Sigma_{XY}=\begin{bmatrix} \Sigma_{XX} & \Sigma_{XY} \\  \Sigma_{YX} & \Sigma_{YY} \end{bmatrix}$.

Let $\Sigma_{YY} = U \Lambda U^T$  be the eigendecomposition of $\Sigma_{YY}$,
with leading eigenvalue $\lambda_1$ and corresponding eigenvector $u_1$. Then
the centroid of $k$-means applying on $(y_1,..,y_n)$ is on the first principal
component
of $Y$,\[	E(u^T_1 Y|u^T_1 Y>0) = \bar{\mu}^Y_1 =\sqrt{2 \lambda_1/\pi}u_1\] and 
\[	E(u^T_1 Y|u^T_1 Y<0) = \bar{\mu}^Y_2 =-\sqrt{2 \lambda_1/\pi}u_1\]
 where $u^T_1 Y \sim \mathcal{N}(0,\lambda_1)$.

To compute $\bar{\mu}^X_1 = E(X|u^T_1 Y>0)$, we need to know the conditional distribution $X|u^T_1 Y$. Since $(X,Y)$ has multivariate normal distribution, $(X,u^T_1 Y)$ also has a multivariate normal distribution with mean $\mathbf{0}$ and covariance matrix
$$\Sigma_{X,u^T_1 Y}=\begin{bmatrix} \Sigma_{XX} & \Sigma_{XY} u_1 \\  u^T_1 \Sigma_{YX} & \lambda_1 \end{bmatrix}$$
The conditional distribution $X|u^T_1 Y$ is hence normal with mean
 $$\mu_{X|u^T_1 Y} = \Sigma_{XY} u_1 \lambda^{-1}_1 u^T_1 Y $$
Therefore, 
\begin{align}
\bar{\mu}^X_1 &= E(X \mid u^T_1 Y>0) \nonumber \\ \nonumber
 			  &= E\left(E[X \mid u^T_1 Y] \mid u^T_1 Y>0\right) \\ \nonumber
 			  &=  E\left(\Sigma_{XY} u_1 \lambda^{-1}_1 u^T_1 Y \mid u^T_1 Y>0\right)\\ \nonumber
 			  &= \lambda^{-1}_1 \Sigma_{XY}u_1 E(u^T_1 Y \mid u^T_1 Y>0) \\ \nonumber
 			  &= \lambda^{-1}_1 \Sigma_{XY}u_1 \sqrt{2 \lambda_1/\pi} \\ \nonumber
      &= \sqrt{2 / (\lambda_1 \pi)} \Sigma_{XY}u_1
\end{align}
Similar calculation yields $\bar{\mu}^X_2 = -\sqrt{2 / (\lambda_1 \pi)} \Sigma_{XY}u_1$.
The decision rule to classify any observed value of $X$ to $\bar{\mu}^X_1$ is therefore
\[	(\bar{\mu}^X_1)^T X >0	\hspace{0.2in}\text{or} \hspace{0.2in} u^T_1\Sigma_{YX}X>0\] 
Since $u^T_1\Sigma_{YX}X$ is a linear combination of $X$, it also has normal distribution 
\[	\mathcal{N} \left( 0, u^T_1\Sigma_{YX} \Sigma_{XX} \Sigma_{XY} u_1\right)	\]
And $(Y,u^T_1\Sigma_{YX}X)$ also have multivariate normal distribution with mean $\mathbf{0}$ 
and covariance matrix
\[
\begin{bmatrix}
\Sigma_{YY} & \Sigma_{YX}\Sigma_{XY}u_1  \\
u^T_1\Sigma_{YX}\Sigma_{XY} &  u^T_1\Sigma_{YX} \Sigma_{XX} \Sigma_{XY} u_1
\end{bmatrix}
\]
The conditional distribution of $Y|u^T_1\Sigma_{YX}X$ is also multivariate normal with mean 
\[	
\mu_{Y|u^T_1\Sigma_{YX}X } = \Sigma_{YX}\Sigma_{XY}u_1 (u^T_1\Sigma_{YX} \Sigma_{XX} \Sigma_{XY} u_1)^{-1}u^T_1\Sigma_{YX}X	
\]
The $Y$ center for $u^T_1\Sigma_{YX}X>0$ is
\begin{align}
\hat{\mu}^Y_1 &= E(Y|u^T_1\Sigma_{YX}X>0) \nonumber \\ \nonumber
 		      & =  \Sigma_{YX}\Sigma_{XY}u_1 (u^T_1\Sigma_{YX} \Sigma_{XX} \Sigma_{XY} u_1)^{-1} E(u^T_1\Sigma_{YX}X \mid u^T_1\Sigma_{YX}X>0) \\ \nonumber
\end{align}
Note that $u^T_1\Sigma_{YX}X$ has normal distribution $\mathcal{N} \left( 0, u^T_1\Sigma_{YX} \Sigma_{XX} \Sigma_{XY} u_1\right)$, so
\[
E(u^T_1\Sigma_{YX}X \mid u^T_1\Sigma_{YX}X>0) = \sqrt{2/\pi}\cdot\sqrt{u^T_1\Sigma_{YX} \Sigma_{XX} \Sigma_{XY} u_1}
\]
Therefore, we have the $Y$ center for $u^T_1\Sigma_{YX}X>0$ be
\begin{align*}
\hat{\mu}^Y_1 &= \sqrt{2/\pi}\cdot\sqrt{u^T_1\Sigma_{YX} \Sigma_{XX} \Sigma_{XY} u_1} \hspace{0.1in} \Sigma_{YX}\Sigma_{XY}u_1 (u^T_1\Sigma_{YX} \Sigma_{XX} \Sigma_{XY} u_1)^{-1} \\
&=\frac{\sqrt{2/\pi}}{\sqrt{u^T_1\Sigma_{YX} \Sigma_{XX} \Sigma_{XY} u_1}} \Sigma_{YX}\Sigma_{XY}u_1
\end{align*} 
 
Recall that $\bar{\mu}^Y_1 =\sqrt{2 \lambda_1/\pi}u_1$, to judge if $\CV(2) >
\CV(1)$, one only need to compare the distance between  $\hat{\mu}^Y_1$  and
$\bar{\mu}^Y_1$ with distance between  $\hat{\mu}^Y_1$ and grand mean $0$. By
the variance and bias decomposition of prediction MSE, when variance is the
same, only bias influences the MSE. 

After some linear algebra manipulation, we get
$\|\hat{\mu}^Y_1 - \bar{\mu}^Y_1\|^2 > \|\hat{\mu}^Y_1\|^2$ or $\CV(2) >
\CV(1)$ if and only if
\[
 \frac{\sqrt{\lambda_1}}{2} > \frac{u^T_1\Sigma_{YX}\Sigma_{XY}u_1}{\sqrt{u^T_1\Sigma_{YX} \Sigma_{XX} \Sigma_{XY} u_1}} 
\]
\end{proof}

\section{Technical Lemmas}
\label{app:technical-lemmas}

\begin{lemma}\label{lem:truncated-normal-moments}

If $Z$ is a standard normal random variable, then
\[
  \E(Z \mid a < Z < b)
    = - \frac{\varphi(b) - \varphi(a)}
             {\Phi(b) - \Phi(a)}
\]
and
\[
  \E\{(Z - \delta)^2 \mid a < Z < b\}
    = \delta^2 + 1
    - \frac{  (b - 2 \delta) \varphi(b)
            - (a - 2 \delta) \varphi(a)}
           {\Phi(b) - \Phi(a)}
\]
for all constants $a$, $b$, and $\delta$, where $\varphi(z)$ and $\Phi(z)$ are
the standard normal probability density and cumulative distribution functions.
These expressions are valid for $a = -\infty$ or $b = \infty$ by taking
limits.

\end{lemma}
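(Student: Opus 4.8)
The plan is to reduce both identities to elementary integrals against the Gaussian density, exploiting the single identity $\varphi'(z) = -z\,\varphi(z)$, which turns every moment integral into a boundary term plus a lower-order integral. First I would record that, conditional on the event $\{a < Z < b\}$, the variable $Z$ has density $\varphi(z)/\{\Phi(b) - \Phi(a)\}$ on the interval $(a,b)$, so that every conditional expectation below is a ratio with denominator $\Phi(b) - \Phi(a)$ and numerator the corresponding integral $\int_a^b (\cdot)\,\varphi(z)\,dz$.

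For the first identity, I would use $z\,\varphi(z) = -\varphi'(z)$ directly, so that $\int_a^b z\,\varphi(z)\,dz = -[\varphi(z)]_a^b = \varphi(a) - \varphi(b)$; dividing by $\Phi(b) - \Phi(a)$ gives the stated mean $-\{\varphi(b) - \varphi(a)\}/\{\Phi(b) - \Phi(a)\}$. For the second identity, I would expand $(Z - \delta)^2 = Z^2 - 2\delta Z + \delta^2$ and reduce matters to computing $\E(Z^2 \mid a < Z < b)$, since the middle term is handled by the first identity and the constant $\delta^2$ simply factors out. To evaluate $\int_a^b z^2\,\varphi(z)\,dz$ I would integrate by parts, writing $z^2\,\varphi(z) = z \cdot (-\varphi'(z))$, which yields $[-z\,\varphi(z)]_a^b + \int_a^b \varphi(z)\,dz = a\,\varphi(a) - b\,\varphi(b) + \{\Phi(b) - \Phi(a)\}$. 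Assembling the three pieces and collecting the coefficients of $\varphi(a)$ and $\varphi(b)$ recovers the claimed numerator $(b - 2\delta)\,\varphi(b) - (a - 2\delta)\,\varphi(a)$, together with the additive constant $\delta^2 + 1$.

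The only point requiring a little care is the validity of the formulas when $a = -\infty$ or $b = \infty$. Here I would note that the boundary quantities $\varphi(a)$, $\varphi(b)$ and the products $a\,\varphi(a)$, $b\,\varphi(b)$ all tend to zero in these limits, because the Gaussian density decays faster than any polynomial; combined with $\Phi(-\infty) = 0$ and $\Phi(\infty) = 1$, this shows the expressions pass to the limit term by term. I do not anticipate a genuine obstacle: each moment reduces to a one-line computation once $\varphi'(z) = -z\,\varphi(z)$ is in hand, and the most error-prone—though entirely routine—step is the bookkeeping that combines the second moment, the first moment, and $\delta^2$ into the stated closed form.
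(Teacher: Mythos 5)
Your proof is correct and follows essentially the same route as the paper: both reduce the claim to elementary Gaussian integrals via $\varphi'(z) = -z\,\varphi(z)$, differing only in bookkeeping (you expand $(Z-\delta)^2$ and integrate each monomial over $(a,b)$, while the paper integrates the full quadratic over $(-\infty,b)$ and takes differences of the two one-sided quantities). Your explicit treatment of the first identity and of the limits $a = -\infty$, $b = \infty$ is slightly more complete than the paper's, which derives only the second moment and leaves the rest implicit.
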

\begin{proof}
We will derive the expression for the second moment.  Integrate to get
\begin{align*}
  \E[ (Z - \delta)^2 1\{Z < b\}]
    &= \int_{-\infty}^b (z - \delta)^2 \varphi(z) \, dz \\
    &= (\delta^2 + 1) \Phi(b) - (b - 2 \delta) \varphi(b).
\end{align*}
Now,
\[
  \E\{(Z - \delta)^2 \mid a < Z < b\}
    =
    \frac{  \E[ (Z - \delta)^2 1\{Z < b\}]
          - \E[ (Z - \delta)^2 1\{Z < a\}]}
         { \Phi(b) - \Phi(a) }.
\]
\end{proof}

Lemma~\ref{lem:truncated-normal-moments} has some important special cases:
\begin{align*}
  \E\{Z \mid Z > 0\} &= 2 \varphi(0) = \sqrt{2 / \pi}, \\
  \E\{(Z - \delta)^2 \mid Z > 0 \}
    &= \delta^2 + 1 - 4 \delta \varphi(0), \\
  \E\{(Z - \delta)^2 \mid Z < 0 \}
    &= \delta^2 + 1 + 4 \delta \varphi(0).
\end{align*}

\section{Wold cross-validation}
\label{sec:wold-cv}

In Wold cross-validation, we perform ``speckled'' hold-outs in each fold, leaving out a
random subset of the entries of the data matrix $\dataX \in \R^{N \times P}$. For each value
of $k$ and each fold, we perform the following set of actions to get an
estimate of cross-validation error, $\CV(k)$, which we average over all folds.

\begin{enumerate}
  \item Randomly partition the set of indices $\{ 1, 2, \dotsc, N \} \times \{
    1, 2, \dotsc, P \}$ into a train set $S_\text{train}$ and a test set
    $S_\text{test}$.

  \item Apply a $k$-means fitting procedure that can handle missing data to
    the training data $\{ \dataX_{i,j} : (i,j) \in S_{\text{train}} \}$. This
    gives a set of cluster means $\mu(1), \dotsc, \mu(k) \in \R^P$ and
    cluster labels for the rows, $G_1, G_2, \dotsc, G_N$.

  \item Compute the cross-validation error as
    \[
      \CV(k) = \sum_{(i,j) \in S_{\text{test}}} \{ \dataX_{i,j} - \mu_j(G_i) \}^2,
    \]
    where $\mu_j(G_i)$ denotes the $j$th component of $\mu(G_i)$.
\end{enumerate}

\bibliographystyle{apalike}

\bibliography{references}
\end{document}